\newcommand{\AP}{\textit{AP} }
\renewcommand{\next}{\LTLcircle}
\newcommand{\nats}{\mathbb{N}}
\newcommand{\pspace}{\textsc{Pspace}}
\newcommand{\expspace}{\textsc{Expspace}}
\newcommand{\bp}[1]{\emph{BP}(#1)}
\begin{document}

\title{Synthesizing Skeletons for Reactive Systems\thanks{This work was partially funded by the European Research Council (ERC) Grant OSARES (No. 683300) and by the Deutsche Telekom Foundation.}}
\author{Bernd Finkbeiner \and Hazem Torfah}
\institute{Saarland University}
\date{}
\maketitle

\begin{abstract}
  We present an analysis technique for temporal specifications of
  reactive systems that identifies, on the level of individual system
  outputs over time, which parts of the implementation are determined
  by the specification, and which parts are still open.  This
  information is represented in the form of a labeled transition
  system, which we call skeleton. Each state of the skeleton is
  labeled with a three-valued assignment to the output variables: each
  output can be true, false, or open, where true or false means that
  the value must be true or false, respectively, and open means that
  either value is still possible.  We present algorithms for the
  verification of skeletons and for the learning-based synthesis of
  skeletons from specifications in linear-time temporal logic
  (LTL). The algorithm returns a skeleton that satisfies the given LTL
  specification in time polynomial in the size of the minimal skeleton. Our new
  analysis technique can be used to recognize and repair
  specifications that underspecify critical situations. The technique
  thus complements existing methods for the recognition and repair of
  overspecifications via the identification of unrealizable cores.
\end{abstract}

\section{Introduction}
The great advantage of synthesis is that it constructs an implementation automatically from a specification -- no programming required. The great disadvantage of synthesis is that the synthesized implementation is only as good as its specification, and writing good specifications is extremely difficult.

Roughly speaking, there are two fundamental errors that can happen when writing a specification. The first type of error is to \emph{overspecify} the system such that actually no implementation exists anymore. This type of error can be found by a synthesis algorithm (it fails!), and synthesis tools commonly assist in the repair of such errors by identifying an unrealizable core of the specification (cf. \cite{DBLP:conf/fmcad/AlurMT13,Koenighofer2011,DBLP:conf/memocode/LiDS11}).
The second type of error is to \emph{underspecify} the system such that not all implementations that satisfy the specification actually perform as intended. This type of error is much harder to detect. The synthesis succeeds, and even if we convince ourselves that the synthesis tool has actually chosen an implementation that performs as intended, there is no guarantee that this will again be the case when a new implementation is synthesized from the same or an extended specification.

The underlying problem is that synthesis algorithms have the freedom to resolve any underspecified behavior in the specification, and we have no way of knowing which parts of the behavior were fixed by the specification, and which parts were chosen by the synthesis algorithm.

In this paper, we introduce a new artifact that can be produced by synthesis algorithms and which provides exactly this information. We call this artifact the \emph{skeleton} of the specification. We envision that synthesis algorithms would produce the skeleton along with the actual implementation, so that the user of the algorithm understands where the implementation is underspecified, and can, if so desired, strengthen the specification in critical areas.

A skeleton is a labeled transition system defined over three-valued sets of atomic propositions, where in each state of the skeleton an atomic proposition is either \emph{true}, \emph{false}, or \emph{open}. For a given specification, the truth value of a proposition in some state of the skeleton is \emph{open} if it can be replaced by $\mathit{true}$ as well as by $\mathit{false}$ without violating the specification. Consider for example the LTL formula $\next p$ for some atomic proposition $p$. Any transition system that satisfies the formula has truth value $\mathit{true}$ for  $p$ in the second position of every path of the transition system. On the other hand, whether $p$ is $\mathit{true}$ or $\mathit{false}$ in the initial state is not determined, either truth value would work. In this case, the skeleton would not fix a particular truth value, but rather leave the value of $p$ in the initial state open. In a sense, the skeleton implements only those parts of the transition system that are determined by the specification.

Skeletons are useful to understand the meaning of partially written specifications. Consider, for example, an arbiter over two clients that share some resource. Each client can make a request to the source (via the inputs $r_1$ and $r_2$) and the arbiter can, accordingly, decide to give out grants via the outputs $g_1$ and $g_2$. A specification for the arbiter might begin with the property of mutual exclusion, i.e., the LTL formula $\LTLsquare(\bar g_1 \vee \bar g_2)$ stating that  only one of the clients should have access to the resource at a time.   Figure \ref{fig:arbiter} shows an implementation of this specification as a transition system and  a skeleton.  The transition system has a single state, and no grants are given at any time (see Figure \ref{subfig:ts1}). The skeleton shown in Figure~\ref{subfig:ske1}  reveals that all outputs are open, as indicated by the question mark.
If we extend the specification with the property $\bar g_1 \wedge \bar g_2$, then the previous transition system does not need to change, because it already satsifies the extended specification. The skeleton, on the other hand, now indicates that the output in the initial state is determined. The output in subsequent states is still open (see Figure \ref{subfig:ske2}). Extending the specification further with the property $\LTLsquare(r_1 \rightarrow \next g_1)$ results in a skeleton where the responses to requests from the first client are determined, and outputs in situations where there is no request from the first client are still open (see Figure \ref{subfig:ske3}). An implementation for this specification could be the transition system that never gives a grant to the second client (see Figure \ref{subfig:ts2}).

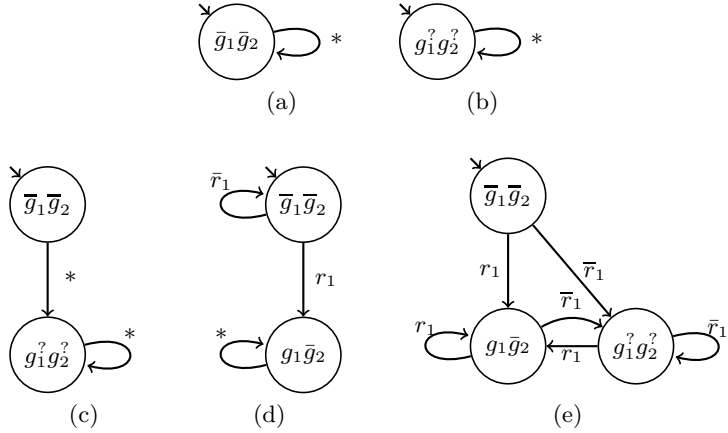
\begin{figure}[t]
\centering
	\subfigure[][]{\label{subfig:ts1}
        \begin{tikzpicture}[inner sep=1pt, minimum size =1cm,node distance=2cm,semithick]
        	\node(s0)[draw,circle] at(0,0){$\bar g_{1} \bar g_2$};
        	\draw[->,thick](-.5,.5) --(s0);
        	\path[->,thick](s0) edge [loop right] node[ right=-8] {*}(s0);
        \end{tikzpicture}   
        } 
      \subfigure[][]{\label{subfig:ske1}
        \begin{tikzpicture}[inner sep=1pt, minimum size =1cm,node distance=2cm,semithick]
        	 \node(s1)[draw,circle] at(0,-1.5){$g_{1}^? g_2^?$};
        	\draw[->,thick](-.5,-1) --(s1);
        	\path[->,thick](s1) edge [loop right] node[ right=-8] {*}(s1);
        \end{tikzpicture}   
        }\\ 
      \subfigure[c][]{\label{subfig:ske2}
        \begin{tikzpicture}[inner sep=1pt, minimum size=1cm,node distance=2cm,semithick]
        	\node(s0)[draw, circle] at(0,0){$\overline g_{1}\overline g_2$};
        	\node(s1)[draw, circle] at(0,-2){$g_{1}^? g_2^?$};
        	\draw[->,thick](-.5,.5) --(s0);
           	\path[->,thick](s0) edge node [right=-6]{*}(s1);
        	\path[->,thick](s1) edge [loop right] node [above=-8]{*}(s1);
        \end{tikzpicture}
        }
     \subfigure[]{\label{subfig:ts2}
       \begin{tikzpicture}[draw,inner sep=1pt, minimum size=1cm,node distance=2cm,semithick]
         \node(s0')[draw, circle] at(0,2){$\overline g_{1}\overline g_2$};
        	\node(s1')[draw, circle] at(0,0){$g_{1} \bar g_2$};
        	\draw[->,thick](-.5,2.5) --(s0');
           	\path[->,thick](s0') edge node [right=-6]{$r_1$}(s1');
        	\path[->,thick](s1') edge [loop left] node [above=-8]{*}(s1');
        	\path[->,thick](s0') edge [loop left] node [above=-6]{$\bar r_1$}(s0');
		\end{tikzpicture}
		}
         \subfigure[]{\label{subfig:ske3}
        \begin{tikzpicture}[draw,inner sep=1pt, minimum size=1cm,node distance=2cm,semithick]
        	\node(s0)[draw, circle] at(0,0){$\overline g_{1}\overline g_2$};
        	\node(s1)[draw, circle] at(0,-2){$g_{1}\bar g_2$};
        	\node(s2)[draw, circle] at(1.7,-2){$g_{1}^?g_2^?$};
        	\path[->,thick](s0) edge node [right=-6]{$\overline  r_1$}(s2);
        	\path[->,thick](s2) edge [loop right] node [above=-8]{$\bar r_1$}(s2);
        	\path[->,thick](s1) edge [loop left] node [above=-8]{$r_1$}(s1);
        	\path[->,thick](s0) edge node [left=-8]{$r_1$}(s1);
        	\path[->,thick](s2) edge node [below= -11]{$r_1$}(s1);
        	\path[->,thick](s1) edge [bend left =30] node [above=-8] {$\overline r_1$}(s2);
        	\draw[->,thick](-.5,.5) --(s0);
         \end{tikzpicture}
         }
\label{fig:arbiter}
\caption{Transition systems and skeletons for an arbiter specification. The symbol * denotes all possible input labels. }	
\end{figure}

We study the \emph{model checking} and \emph{synthesis} problems for skeletons. For a given LTL formula $\varphi$ and a skeleton $\mathcal S$ we say that $\mathcal S $ is a model of the LTL formula $\varphi$, if each trace in $\mathcal S$ satisfies following condition: If the truth value for some proposition $p$ in some position of the trace is open, then $\varphi$ must both have a model where $p$ is $\mathit{true}$ at this position, and a model where $p$ is $\mathit{false}$ at this position. Furthermore, if the trace has truth value \emph{true} or \emph{false} for $p$ at some position, then \emph{all} models of $\varphi$ map $p$ to the truth value \emph{true} or \emph{false}, respectively, at this position.

 We show that given an LTL formula $\varphi$ we can build a nondeterministic automaton that accepts a sequence over the three-valued semantics if it satisfies the satisfaction relation described above. The automaton is of doubly-exponential size in the length of the formula $\varphi$. With this automaton, the model checking problem can be solved  in $\expspace$.  

To solve the synthesis problem, we could determinize the automaton and check whether there is a skeleton for the formula, along the lines of standard synthesis~\cite{rosner1991modular}, but this construction would be very expensive. Instead, 
 we introduce a synthesis algorithm for skeletons based on \emph{learning}. We show that for each LTL formula, a skeleton that models the formula defines a safety language that can be learned using the learning algorithm L$^*$. The algorithm can learn a skeleton for an LTL formula in time polynomial in the size of the minimal skeleton for the specification. The membership and equivalence queries of the L$^*$ algorithm are answered by the model checking algorithm introduced in this paper.
 
 \paragraph{Related Work.}
 There is a rich body of work on the synthesis of reactive systems from logical specifications~\cite{JSL:9119904,Bloem:2012:SRD:2160191.2160503,ScheweFinkbeiner2013,Manna:1984:SCP:357233.357237,Pnueli:1989:SRM:75277.75293}. Supplemented by many works that investigated the  optimization of specification for synthesis and the identification of unrealizable specification \cite{Koenighofer2011,DBLP:conf/memocode/LiDS11,DBLP:conf/fmcad/AlurMT13}. 
Multi-valued extensions of logics have been rather popular in the verification of systems, where a simple truth value is not enough to determine the quality of implementations. Chechik et. al. provide a theoretical basis for multi-valued model checking \cite{Chechik:2003:MSM:990010.990011}, where the satisfaction relation $\mathcal M \models \varphi$ for a model $\mathcal M$ and a specification $\varphi$ can be multi-valued. Bruns and Godefroid experiment on multi-valued logics and show that many algorithms for multi-valued logics can be reduced to ones for two-valued logics \cite{Bruns2004}. Easterbrook and Chechik introduce a framework where multiple inconsistent models are merged according to an underlying specification given in a multi-valued logic, where the different values in the specification represent the different levels of uncertainty, priority and agreement between the merged models \cite{Easterbrook:2001:FMR:381473.381516}. In comparison to all these works, we are interested in multi-valued extensions of the models themselves and in the synthesis of such models, in order to determine the amount of information that resides in a specification.

The term skeleton has been also used by Emerson and Clarke which shall not be confused with the skeletons presented here. They presented a method for the synthesis of synchronization skeletons that abstract from details irrelevant to synchronization of concurrent systems \cite{Clarke:1981:DSS:648063.747438}. In our skeletons, we stick to the structure of transition systems and leave place holders for the underspecified details, which may then be supplemented with further steps to a complete transition system.

\section{Preliminaries}

\paragraph{Alternating Automata. }
We define an \emph{alternating B\"uchi automaton} as a tuple $\mathcal{A} =
(\Sigma,Q,q_0,\delta,F)$, where $\Sigma$ denotes a
finite alphabet,
$Q$ denotes a finite set of states, $q_0 \in Q$ denotes a designated
initial state, $\delta: Q \times \Sigma \rightarrow
\mathbb{B}^+(Q)$ denotes a transition function, that maps a state and an input letter to a
positive boolean combination of states, and finally the set $F \subseteq Q$ of accepting states.

We define infinite words over $\Sigma$ as sequence $\sigma: \nats \rightarrow \Sigma$. A $\Sigma$-tree is a pair $(\mathcal T, r)$ over a set of directions $D$, where $\mathcal T$ is a prefix-closed subset of $D^*$ and $r: \mathcal T \rightarrow \Sigma$ is a labeling function. The empty sequence $\epsilon$ is called the \emph{root}. The children of a node $n \in \mathcal T$ are nodes $C(n)=\{n \cdot d \in \mathcal T \mid d \in D\}$. 

 A \emph{run} of an automaton $\mathcal{A} = (\Sigma,Q,q_0,\delta,F)$ on a sequence $\sigma: \nats \rightarrow \Sigma$ is a $Q$-tree $(\mathcal T,r)$ with $r(\epsilon)=q_0$ and for all nodes $n\in \mathcal T$, if $r(n)=q$ then the set $\{r(n') \mid n' \in C(n)\}$ satisfies $\delta(q,\sigma(|n|))$. 

A run $(\mathcal T,r)$ is \emph{accepting} if for every infinite branch $n_0,n_1, \dots$ the sequence 
$r(n_0)r(n_1)\ldots$ satisfies the
\emph{B\"uchi condition}, which requires that some state from $F$ occures
infinitely often in the sequence $r(n_0)r(n_1)\ldots$. 

The set of accepted words by the automaton $\mathcal A$ is the language of the automaton and is denoted by $L(\mathcal A)$. An automaton is empty iff its language is the empty set.

A \emph{nondeterministic} automaton is a special alternating
automaton, where the image of $\delta$ consists only of such formulas
that, when rewritten in disjunctive normal form, contain exactly one
element of $Q$ in
every disjunct.

An alternating automaton is called \emph{universal} if, for all states $q$ and input letters $\alpha$, $\delta(q,\alpha)$ is a conjunction.
A  universal and nondeterministic automaton is called \emph{deterministic}.

A  \emph{B\"uchi} automaton is called a \emph{safety} automaton if $Q= F$. Safety
automata are denoted by a tuple $(\Sigma,Q,q_0,\delta)$. 
For safety automata, every run graph is accepting. 

The dual of B\"uchi automata are \emph{co-B\"uchi} automata. In a co-B\"uchi automaton the set $F$ is a set of rejecting states and a run is accepting if it has only finitely many appearances of states in $F$. 

\paragraph*{Safety Languages:} A finite word $w= \{1, \dots , i\} \rightarrow \Sigma$ over some finite alphabet $\Sigma$ is called a bad-prefix for a language $L \subseteq \Sigma^\omega$, if every infinite word $\sigma \in (\nats \rightarrow \Sigma)$ with prefix $ w$ is not in the language $L$. 
A language $L \subseteq (\nats \rightarrow \Sigma)$ is called a safety language, if  every $\sigma \not \in L$ has a bad-prefix.
We denote the set of bad-prefixes for a language $L$ by $\bp{L}$. 
For every safety language $L$ we can define a finite word automaton $\mathcal B = ( Q_{\mathcal B}, Q_{\mathcal B,0}, F_{\mathcal B},  \delta_{\mathcal B} )$ that accepts the language $\bp{L}$. We call $\mathcal B$  the  bad-prefix automaton of $L$.

\paragraph{Linear-time Temporal Logic:}
We use Linear-time Temporal Logic (LTL) \cite{Pnueli:1977:TLP:1382431.1382534}, with the usual temporal operators Next $\LTLcircle$, Until $\cal U $ and the derived operators Eventually $\LTLdiamond$ and Globally~$\LTLsquare$. 
 LTL formulas are defined over a set of atomic propositions $\AP = I \cup O$, which is partitioned into a set $I$ of input propositions and a set $O$ of output propositions. 
We denote the satisfaction of an LTL formula $\varphi$ by an infinite sequence $\sigma\colon \mathbb N \rightarrow 2^{\AP}$ of valuations of the atomic propositions  by $\sigma \models \varphi$. For an LTL formula $\varphi$ we define the language $L(\varphi)$ by the set $\{\sigma \in(\nats \rightarrow 2^\AP ) \mid \sigma \models \varphi \}$. 

\paragraph{Implementations:}We represent implementations as \textit{labeled transition systems}. For a given finite set $\Upsilon$ of directions and a finite set $\Sigma$ of labels, a $\Sigma$-labeled $\Upsilon$-transition system is a tuple $\mathcal T = (T,t_0,\tau,o)$, consisting of a finite set of states~$T$, an initial state $t_0\in T$, a transition function $\tau\colon T\times \Upsilon \rightarrow T$, and a labeling function $o\colon T\rightarrow \Sigma$.
A \textit{path} in $\mathcal T$ is a sequence~$\pi\colon \mathbb N \rightarrow T \times \Upsilon$  of states and directions that follows the transition function, i.e., for all $i \in \mathbb N$  if $ \pi(i) = (t_i,e_i)$ and $\pi(i + 1) = (t_{i+1}, e_{i+1})$, then $t_{i+1} = \tau(t_i, e_i)$. We call a path initial if it starts with the initial state: $\pi(0) = (t_0 , e )$ for some $e \in \Upsilon$. We denote the set of initial paths of $\mathcal T$ by $\emph{Path}(T)$. For a path $\pi \in \emph{Path}(T)$, we  denote the sequence $\sigma_{\pi}\colon i \mapsto o(\pi(i))$, where $o(t,e)= (o(t)\cup e)$ by the \emph{trace} of $\pi$.  We call the set of traces of the paths of a transition system $\mathcal T$ the language of the $\mathcal T$, denoted by $L(\mathcal T)$. 

For a set of atomic propositions $\AP= O \cup I$, we say that a $2^O$-labeled $2^I$-transition system $\mathcal T$ satisfies  an LTL formula $\varphi$, if and only if $L(T) \subseteq L(\varphi)$, i.e., every trace of $\mathcal T$ satisfies $\varphi$. In this case we call $\mathcal T$ a model of $\varphi$. 

\paragraph{Multi-valued Sets:} A multi-valued set over an alphabet $\Sigma $ and set of values $\Gamma$ is a function $v\in (\Sigma \rightarrow \Gamma)$. The simplest type of multi-valued sets is the two-valued set which define the notion of sets as we know, where $\Sigma$ is a set of symbols and $\Gamma=\{\bot, \top\}$, i.e., for a two-valued set $v$ over $\Sigma$ and $\Gamma$, a symbol $a\in \Sigma$ is in $v$ if $v(a)=\top$, and not otherwise. The set of all multi-valued sets over an alphabet $\Sigma$ and a set of values $\Gamma$ is denoted by $\Gamma^\Sigma$, e.g., in the usual set notion this is the set $\{\bot,\top\}^\Sigma$ or as we know it $2^\Sigma$ for an alphabet $\Sigma$.

For a multi-valued set $v \in \Gamma^\Sigma$ and for $p \in \Sigma$ and $h \in \Gamma$ we define the multi-valued set $v' = v[p \mapsto h]$, where $v'(p)=h$ and for all $p' \in \Sigma \setminus \{p\}$, we have $v'(p')=v(p')$. 
For a multi-valued set $v \in \Gamma^\Sigma$ and for a set $\Sigma' \subseteq \Sigma$ the set $v_{\Sigma'}\in \Gamma^{\Sigma'}$ is the multi-valued set obtained by projection from $\Sigma$ to $\Sigma'$. 
\section{Skeletons}

 An \emph{open set} over an alphabet $\Sigma$ is a three-valued set $v:\{\top,\bot,?\}^\Sigma$, where each element $a \in\Sigma$ is either in $v$ denoted by $v(a)= \top$, not in $v$ denoted by $v(a)= \bot$, or it is open whether it is in the set or not, i.e., it could be one of both, denoted by $v(a)=?$. In the remainder of the paper, we denote the set $\{\top,\bot,?\}^\Sigma$ by $3^\Sigma$. For  two open sets $v,v' \in 3^\Sigma$ we define the partial order $\sqsubseteq$ such that $v \sqsubseteq v'$ if and only if for all symbols $a\in \Sigma$, $v(a) \preceq v'(a)$ with respect to the lattice~$\preceq =\{(\bot,\bot),(\top,\top),(\bot,?), (\top,?),(?,?)\}$.  

We call a sequence $\sigma$ an open sequence if it is a sequence over open sets, i.e., $\sigma \in ( \nats \rightarrow 3^\Sigma)$. For two open sequences $\sigma$ and $\sigma'$ we define the partial order $\sqsubseteq$ such that $\sigma \sqsubseteq \sigma'$ if for all $i \in \nats$, $\sigma(i) \sqsubseteq \sigma'(i)$. For a sequence $\sigma \in (\nats \rightarrow 3^\Sigma)$ and $\Sigma' \subseteq \Sigma$ the sequence $\sigma_{\Sigma'} \in (\nats \rightarrow 3^{\Sigma'})$ is the sequence where for all  $i$, $\sigma_{\Sigma'}(i)= \sigma(i)_{\Sigma'}$. 

We define the satisfaction relation of LTL over open sequences as follows. Given an LTL formula $\varphi$ over a set of atomic propositions $\AP=O\cup I$, an open sequence $\sigma$ satisfies $\varphi$, denoted by $\sigma \models \varphi$, if for each sequence $\sigma' \in L(\varphi)$ that is input equivalent to $\sigma$, i.e., $\sigma_I = \sigma'_I$, we have $\sigma' \sqsubseteq \sigma$. For a fixed sequence of inputs $\varsigma \in (\nats \rightarrow 2^I)$, there is a unique open sequence $\sigma$ with $\sigma_I=\varsigma$ that satisfies $\varphi$ and that is minimial with respect to the partial order $\sqsubseteq$, i.e., for all sequences $\sigma'\in (\nats \rightarrow 3^\AP)$ with $\sigma' \models \varphi$ and $\sigma'_I =\varsigma$, we have $\sigma \sqsubseteq \sigma'$. We call such sequence a \emph{minimal satisfying sequence}. For an LTL formula $\varphi$, we denote the set of all minimal satisfying sequences by $\min(\varphi)$. 

Building on the definitions of open sequences and transition systems we introduce the notion of \emph{skeletons} of reactive systems, which are transition systems labeled with open sets from $3^O$.

\begin{definition}[Skeleton]
 	For a set $\AP=O\cup I$ of atomic propositions, a \emph{skeleton} over $\AP$ is a $3^O$-labeled-$2^I$-transition system.
\end{definition}	
 
The language of a skeleton $\mathcal S$ is the set of open sequences given by the set of its traces.
Figure~\ref{fig:skeletons} shows four skeletons defined over the sets $I=\{r_1,r_2\}$ and $O=\{g_1,g_2\}$. Figures \ref{subfig:1} and \ref{subfig:3} both define the language $\{\sigma: \nats \rightarrow 3^\AP \mid \forall i. \sigma(i)(g_1)=\sigma(i)(g_2) =?  \}$, i.e., for all input sequences the values of the output propositions $g_1$ and $g_2$ are open in all positions. The language of the skeleton in Figure \ref{subfig:2} is the set $\{\sigma: \nats \rightarrow 3^\AP \mid \sigma(0)(g_1)=\sigma(0)(g_2)=\bot, \forall i>0.~  \sigma(i)(g_1)=\top \wedge \sigma(i)(g_2)=? \}$ where the values of $g_1$ are fixed in all positions and for $g_2$  only in the first position of the sequence.\footnote{Note that skeletons have no open values for input propositions.} 

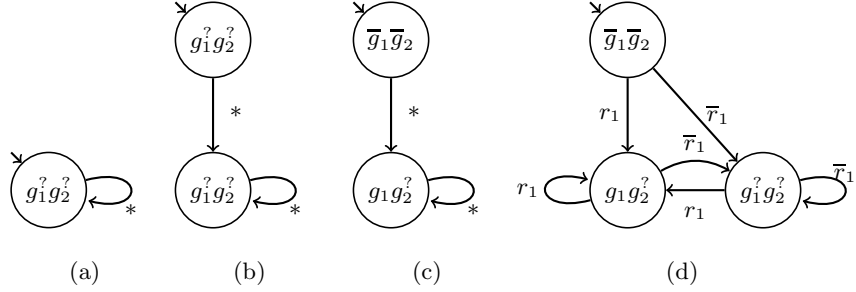
\begin{figure}[t]
\centering
	\subfigure[]{\label{subfig:1}
        \begin{tikzpicture}[inner sep=1pt, minimum size =1cm,node distance=2cm,semithick]
        	\node(s0)[draw,circle] at(0,0){$g_{1}^? g_2^?$};
        	\draw[->,thick](-.5,.5) --(s0);
        	\path[->,thick](s0) edge [loop right] node[ below=-6] {*}(s0);
        \end{tikzpicture}   
        }\hspace{-7pt}
	\subfigure[]{\label{subfig:3}
        \begin{tikzpicture}[inner sep=1pt, minimum size=1cm,node distance=2cm,semithick]
        	\node(s0)[draw, circle] at(0,0){$g_{1}^? g_2^?$};
        	\node(s1)[draw, circle] at(0,-2){$g_{1}^? g_2^?$};
        	\draw[->,thick](-.5,.5) --(s0);
           	\path[->,thick](s0) edge node [right=-6]{*}(s1);
        	\path[->,thick](s1) edge [loop right] node [below=-6]{*}(s1);
        \end{tikzpicture}
        }
\hspace{-5pt}
\subfigure[]{\label{subfig:2}
        \begin{tikzpicture}[inner sep=1pt, minimum size=1cm,node distance=2cm,semithick]
        	\node(s0)[draw, circle] at(0,0){$\overline g_{1}\overline g_2$};
        	\node(s1)[draw, circle] at(0,-2){$g_{1} g_2^?$};
        	\draw[->,thick](-.5,.5) --(s0);
           	\path[->,thick](s0) edge node [right=-6]{*}(s1);
        	\path[->,thick](s1) edge [loop right] node [below=-6]{*}(s1);
        \end{tikzpicture}
        }
	\hspace{-20pt}
	\subfigure[]{\label{subfig:4}
        \begin{tikzpicture}[draw,inner sep=1pt, minimum size=1cm,node distance=2cm,semithick]
        	\node(s0)[draw, circle] at(0,0){$\overline g_{1}\overline g_2$};
        	\node(s1)[draw, circle] at(0,-2){$g_{1}g_2^?$};
        	\node(s2)[draw, circle] at(1.8,-2){$g_{1}^?g_2^?$};
        	\path[->,thick](s0) edge node [right=-6]{$\overline  r_1$}(s2);
        	\path[->,thick](s2) edge [loop right] node [above=-8]{$\overline r_1$}(s2);
        	\path[->,thick](s1) edge [loop left] node [left=-8]{$r_1$}(s1);
        	\path[->,thick](s2) edge node [below=-6]{$r_1$}(s1);
        	\path[->,thick](s0) edge node [left=-8]{$r_1$}(s1);
        	\path[->,thick](s1) edge [bend left] node [above=-8] {$\overline r_1$}(s2);
        	\draw[->,thick](-.5,.5) --(s0);
        \end{tikzpicture}
        }
\caption{Skeletons over the sets $I=\{r_1,r_2\}$ and $O=\{g_1,g_2\}$}
\label{fig:skeletons}
\end{figure}	

We say that a skeleton $\mathcal S$ is a \emph{model} of an LTL formula $\varphi$ denoted by $\mathcal S \models \varphi$, if $L(\mathcal S) = \min(\varphi)$. 
Intuitively, for an LTL formula $\varphi$, a skeleton gives an incomplete transition system where values of atomic propositions that are not deterministically fixed by $\varphi$, are left open, i.e., they are mapped to the value $?$ in the open set of a state.   
Consider the formula $\varphi = \overline g_1 \wedge \overline g_2 \wedge \LTLsquare (r_1 \rightarrow \next g_1 )$. We notice that all transition systems that satisfy $\varphi$ must have the label $\overline g_1 \overline g_2$ in the initial state. For the rest of the transition system, the formula forces only to label a state with $g_1$ in case the direction(input) leading to this state contains the proposition $r_1$, and leaves it open on how to label the states reached by other directions, or whether to label a state with $g_2$ if it is reached by an input where $r_1$ is true (Figure \ref{subfig:4}). 

Building on the satisfaction relation between LTL and skeleton we investigate in the next sections the problems of model checking and synthesis of skeletons. 
\section{Model Checking Skeletons}

We present an automata-based model checking algorithm for skeletons. Given an LTL formula $\varphi$ we show that we can construct a nondeterministic B\"uchi automaton that recognizes the complement language $\overline{\min(\varphi)}$. Using the usual product construction, in this case, the product of the automaton and the skeleton, one can check whether the resulting automaton contains a path that simulates an accepting path in the nondeterministic automaton. If this is the case, then the language of the skeleton contains a sequence in $\overline{\min(\varphi)}$ and, thus, the skeleton is not a model for the formula $\varphi$. Using the construction of the product automaton we also show that checking  whether a skeleton is a model of an LTL formula can be done in space exponential in the length of the formula.

\begin{lemma}
	Given an LTL formula $\varphi$ we can build a nondeterministic B\"uchi automaton $\mathcal N =(3^\AP,Q,q_0,F,\delta)$ such that $L(\mathcal N)=\overline{\min(\varphi)}$. The number of states of $\mathcal N$ is doubly-exponential in the length of $\varphi$. 
\label{lem:NBAforLTL}
\end{lemma}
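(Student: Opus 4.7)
The plan is to construct $\mathcal{N}$ as a disjoint union of two automata, each detecting a distinct way an open sequence $\sigma \in (\nats \to 3^\AP)$ can fail to lie in $\min(\varphi)$. Unpacking the definition, $\sigma \notin \min(\varphi)$ iff there exist a position $i \in \nats$ and an output proposition $p \in O$ such that one of the following disjoint cases holds: (I) $\sigma(i)(p) \in \{\top, \bot\}$ and some $\sigma' \in L(\varphi)$ with $\sigma'_I = \sigma_I$ has $\sigma'(i)(p) = \neg\sigma(i)(p)$; or (II) $\sigma(i)(p) = \mathord{?}$ and, for some $c \in \{\top,\bot\}$, no $\sigma' \in L(\varphi)$ with $\sigma'_I = \sigma_I$ satisfies $\sigma'(i)(p) = c$. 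Case (I) witnesses that $\sigma$ has pinned an output value that is not forced by $\varphi$, and Case (II) witnesses that $\sigma$ has left an output open although $\varphi$ already fixes it.

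First I build, via the standard LTL-to-NBA translation, a nondeterministic B\"uchi automaton $\mathcal{A}_\varphi$ over $2^\AP$ of size $2^{O(|\varphi|)}$ for $L(\varphi)$. For Case (I), I construct an NBA $\mathcal{N}_1$ that reads $\sigma \in (3^\AP)^\omega$, nondeterministically picks an output $p \in O$ and a position $i$ (the latter tracked by a constant-size phase gadget with states \emph{before}, \emph{at}, \emph{after}), verifies at position $i$ that $\sigma(i)(p) \in \{\top, \bot\}$, and in parallel simulates $\mathcal{A}_\varphi$ on a nondeterministically guessed $\sigma' \in (2^\AP)^\omega$ constrained so that $\sigma'_I$ agrees with $\sigma_I$ at every step and $\sigma'(i)(p) = \neg \sigma(i)(p)$ at position $i$; acceptance is inherited from $\mathcal{A}_\varphi$. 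The size of $\mathcal{N}_1$ is $2^{O(|\varphi|)}$.

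For Case (II), the quantification ``no $\sigma'$ exists'' is universal over output completions of $\sigma_I$ and is the source of the second exponential. For each fixed $p \in O$ and $c \in \{\top,\bot\}$, I introduce an auxiliary marker alphabet $\{0,1\}$ and consider the language $K_{p,c}$ over $2^I \times \{0,1\}$ of pairs $(\varsigma, \mu)$ such that $\mu$ has exactly one $1$, at some position $i$, and some $\sigma' \in L(\varphi)$ satisfies $\sigma'_I = \varsigma$ and $\sigma'(i)(p) = c$. An NBA for $K_{p,c}$ of size $2^{O(|\varphi|)}$ is obtained from $\mathcal{A}_\varphi$ by guessing the output labels on the fly and using a phase gadget to enforce the $p = c$ constraint at the unique marked step. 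Complementing this NBA via the Safra--Piterman construction yields an NBA $\overline{K_{p,c}}$ of size $2^{2^{O(|\varphi|)}}$. The NBA $\mathcal{N}_2^{p,c}$ then reads $\sigma$, nondeterministically ``fires'' the marker once at some position $i$, verifies $\sigma(i)(p) = \mathord{?}$ at that step, and feeds the stream $(\sigma_I, \mu)$ into $\overline{K_{p,c}}$ in parallel, accepting iff both components accept.

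Finally, taking the disjoint union $\mathcal{N} = \mathcal{N}_1 \cup \bigcup_{p \in O,\,c \in \{\top,\bot\}} \mathcal{N}_2^{p,c}$ yields an NBA for $\overline{\min(\varphi)}$ with $2^{2^{O(|\varphi|)}}$ states. The main obstacle will be Case (II): checking that \emph{no} output completion of $\sigma_I$ yields a model of $\varphi$ with a prescribed value at a prescribed position is a universal $\omega$-regular property, and its nondeterministic recognition forces the Safra-style complementation that is the single source of the second exponential blowup.
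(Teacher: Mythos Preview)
Your proposal is correct and uses the same two-case decomposition as the paper (wrongly-fixed vs.\ wrongly-open output), but the automaton constructions within each case differ.

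For the wrongly-open case (your Case~II, the paper's $\mathcal N_1$), the paper avoids Safra: it starts from a \emph{universal co-B\"uchi} automaton $\mathcal U_1$ for $\neg\varphi$ (the dual of an NBA for $\varphi$, hence of size $2^{O(|\varphi|)}$), adds marker letters $*_\top,*_\bot$ to encode the guessed error position, handles each genuine~$?$ by \emph{conjunctive} branching (both values), and only then converts the resulting universal co-B\"uchi automaton to an NBA via a breakpoint/subset construction. This last step is the single source of the second exponential, just as Safra is for you. The paper's route is lighter machinery (Miyano--Hayashi rather than Safra/Piterman) and makes the universal quantifier over output completions syntactically explicit as universal transitions; your route is more modular, treating the universal quantifier as a black-box complementation.

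For the wrongly-fixed case (your Case~I, the paper's $\mathcal N_2$), your observation that the property is purely existential---guess the witnessing $\sigma'\in L(\varphi)$ on the fly---gives an NBA of only singly exponential size. The paper builds $\mathcal N_2$ by the same marker-and-conversion scheme as $\mathcal N_1$ and states a doubly-exponential bound, which is looser than necessary; your construction is tighter here. Overall both arguments yield the claimed doubly-exponential bound for $\mathcal N$, driven by the wrongly-open case.
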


\paragraph{Construction.} 
The language $\overline{\min(\varphi)}$ contains all sequences $\sigma:\nats \rightarrow 3^\AP$ that are not minimal satisfying open sequences for $\varphi$. These can be distinguished by two types of open sequences. The first type involves sequences $\sigma$ where in some position $i$ the truth value of a proposition $p \in \AP$ is open (mapped to $?$), although, in all   sequences $\sigma' \in L(\varphi)$ with $\sigma_I = \sigma'_I$ the proposition $p$ has the one  same truth value (one of $\top$ or $\bot$ in all sequences) at position $i$. The second type are  sequences $\sigma$, where in some position $i$ a proposition $p$  has truth value $\bot$(resp. $\top$), although, there exists another sequence $\sigma' \in L(\varphi)$ with $\sigma'_I=\sigma_I $ and $\sigma'(i)(p)=\top$(resp. $\bot$).
The latter case also subsumes the case of sequences $\sigma \in (\nats \rightarrow 2^\AP)$ with $\sigma \not \in L(\varphi)$. 

We construct a B\"uchi automaton $\mathcal N =(3^\AP,Q,q_0,F,\delta)$ that accepts an open sequence $\sigma$ if and only if $\sigma \not \in \min(\varphi)$. The automaton is composed of two nondeterministic B\"uchi automata $\mathcal N_1=(3^\AP,Q_1,q_{0,1},F_1,\delta_1)$ and $\mathcal N_2=(3^\AP,Q_2,q_{0,2},F_2,\delta_2)$, one for each of the sequence types mentioned above. We define the automaton as $\mathcal N = \mathcal N_1 \vee \mathcal N_2$, where  $Q=\{q_0\} \cup Q_1 \cup Q_2$, $F= F_1 \cup F_2$ and $\delta= \{(q_0,a, \delta_1(q_{0,1},a) \vee \delta_2(q_{0,2},a))\mid a \in 3^\AP\}\cup \delta_1 \cup \delta_2$

Automaton $\mathcal N_1$ accepts a sequence $\sigma \in (\nats \rightarrow 3^\AP)$ if $\sigma$ has a position $i$ where an atomic proposition $p\in \AP$ is incorrectly marked as open. The automaton $\mathcal N_1$ can be constructed as follows:

Let $\mathcal U_1 = (2^\AP, Q_1^{\mathcal U},q_{0,1}^{\mathcal U}, F_{1}^{\mathcal U}, \delta_1^{\mathcal U})$ be a universal co-B\"uchi automaton for the formula $\neg \varphi$. We extend the automaton $\mathcal U_1$ to another universal co-B\"uchi automaton $\mathcal U_1^*$ over an extended alphabet $\{\top,\bot,?,*_\top, *_\bot\}^\AP$. We make use of the values $*_\top$ and $*_\bot$ to encode in the input sequence whether a mapping to $?$ is wrong, and whether it is wrong when replacing $?$ by $\top$ or by $\bot$. We define $\mathcal U_1^* = (\{\top,\bot,?,*_\top, *_\bot\}^\AP, Q_1^*,q_{0,1}^*,F_1^*, \delta_1^*)$ over two copies of the automaton $\mathcal U_1$(denoted by the numbers 1 and 2) where $Q_1^* = Q_1^{\mathcal U} \times \{1,2\}$,~ $q_{0,1}^* = (q_{0,1}^{\mathcal U},1)$,~ $F_1^* = F_1^{\mathcal U}\times \{1,2\}$. The transition function $\delta_1^*$ is given by the union of the following sets:
	\begin{itemize}
		\item $ \{ ((q,h),v, \delta_1^{\mathcal U}(q,v)_{\{q'\in Q_1^{\mathcal U}/(q',h)\}}) \mid h \in \{1,2\},  \forall p \in O. v(p) \in \{\top,\bot\}  \}$ \\  where in both copies of the automaton $\mathcal U_1$, transitions over symbols $v$ with no open values remain in the same copy and follow the structure of the transition relation $\delta_1^{\mathcal U}$ of $\mathcal U_1$. The operation $\{q'\in Q_1^{\mathcal U}/(q',h)\}$ substitutes every appearance of a state $q'$ in $\delta_1^{\mathcal U}(q,v)$ by a state $(q',h)$ from $Q_1^*$. 
		\item 
		   $\{((q,h),v, (\delta_1^{\mathcal U}(q,v[p \mapsto \top])\wedge \delta_1^{\mathcal U}(q,v[p \mapsto \bot]))_{\{q' \in Q_1^{\mathcal U}/(q',h)\}}) \mid\\
			 h \in \{1,2\},  p \in O, v(p)=?, 
		         \}$\\
		     universal transitions for symbols where a proposition $p$ has an open truth value imitating transitions for both truth values $\top$ and $\bot$ for $p$.
		 \item
		    $\{((q,1),v, \delta_1^{\mathcal U}(q, v[p \mapsto \top])_{\{q'\in Q_1^{\mathcal U}/(q',2)\}}) \mid p \in O, v(p)= *_\top    \}$\\
		    when we guess at some position $i$ that an open truth value for a proposition $p$ is wrong, and it is wrong when replacing it by $\top$ we follow the transition $\top$ to the second copy of $\mathcal U_1$ in which $?,*_\bot$ and $*_\top$ are treated equivalently. This helps to check, whether replacing $?$ by $\top$ results in accpeting run in $\mathcal U_1$, which means that at  position $i$ the truth value $\top$ violates the property $\varphi$, and thus it cannot be open at the that point. 
		 \item
		    $\{((q,1),v, \delta_1^{\mathcal U}(q, v[p \mapsto \bot])_{\{q'\in Q_1^{\mathcal U}/(q',2)\}} ) \mid  p \in O, v(p)= *_\bot  \} $\\
		    which introduce transitions that involve the dual case of $*_\top$.
		 \item
		     $\{((q,2),v, (\delta_1^{\mathcal U}(q,v[p \mapsto \top])\wedge \delta_1^{*}(q,v[p \mapsto \bot]))_{\{q'\in Q_1^{\mathcal U}/(q',2)\}}) \mid \\ p \in O, v(p)\in \{*_\bot, *_\top\}   \}$\\
		     these transitions make sure that when moving to copy 2 of $\mathcal U_1$, values $*_\top$ and $*_\bot$ are treated equally to $?$, because after guessing that a $?$ is wrong it must be wrong for all continuations.  
	\end{itemize}    
	
In order to obtain the desired automaton $\mathcal N_1$ over the alphabet $3^\AP$ we first transform the automaton $\mathcal U_1^*$ to a nondeterministic automaton $\mathcal N_1^*$ with $L(\mathcal U_1^*)= L(\mathcal N_1^*)$ using a subset construction. This is necessary in order to merge all transitions $*_\bot$ at one level into one state. The same holds also for transitions $*_\top$. In this way, we can check whether at some position in a sequence a value $?  $ is wrong by checking all possible branches of the automaton $\mathcal U_1^*$ at that level. The automaton $\mathcal N_1^*$ can be transformed now to the desired automaton $\mathcal N_1$ by projecting every transition label with values in $\{*_\top,*_\bot\}$ to a label $v' \in 3^\AP$ such that for every $p \in O$, if $v(p)=*_\top$ or $v(p)=*_\bot$ then $v'(p)=?$.  

The size of the automaton $\mathcal U_1$ is exponential in the length of $\varphi$ using the transformation of LTL formulas into alternating B\"uchi automata \cite{Vardi95alternatingautomata}, and then using a subset construction. The transformation to $\mathcal U_1^*$ from  $\mathcal U_1$, and to $\mathcal N_1$ from $\mathcal N_1^*$ are both polynomial, and exponential from $\mathcal U_1^*$ to $\mathcal N^*_1$. Thus, the size of $\mathcal N_1$ is doubly-exponential in the length of $\varphi$. 

In a similar way, we can construct the automaton $\mathcal N_2$. Automaton $\mathcal N_2$ accepts a sequence $\sigma \in (\nats\rightarrow 3^\AP)$ if a proposition $p\in \AP$ is incorrectly mapped to $\top$ or $\bot$.
Starting with the alternating B\"uchi automaton for the formula $\varphi$, we extend the alphabet with symbols $*_\top$ and $*_\bot$ and build an automaton $\mathcal U_2^* = (\{\top,\bot,?,*_\top, *_\bot\}^\AP, Q_2^*,q_{0,2}^*,F_2^*, \delta_2^*)$. Whenever we read a symbol $v$ where some $p \in O $ is mapped to $*_\top (*_\bot)$, the automaton follows the transition for $v(p)=\bot (\top)$. After turning $\mathcal U_2^*$ to a nondeterministic automaton and projecting, a label $v$ is replaced by a label $v'$ such that for every $p \in O$, if $v(p)=*_\top$ or $v(p)=*_\bot$ then $v'(p)=\top$ or $v'(p)=\bot$, respectively. The automaton $\mathcal N_2$ is doubly-exponential in the length of $\varphi$.

\begin{proof}
Let $\sigma \in (\nats \rightarrow 3^\AP )$. We distinguish three cases:
\begin{itemize}
	\item $\sigma \in \overline{\min(\varphi)}$ and for some $i$ and some $p \in O$, the mapping $\sigma(i)(p)=?$ is wrong. We assume, w.l.o.g., that for all $\sigma' \in L(\varphi)$ with $\sigma_I=\sigma'_I$, that $\sigma'(i)(p)= \top$, and that $i$ is the first position for which $\sigma(i)(p)=?$ is wrong. A run of the automaton $\mathcal N_1$ over $\sigma$ is a sequence $r \in (\nats \rightarrow 2^{Q_1^*})$. Let $r= X_0X_1...$ be the run of the automaton $\mathcal N $ on $\sigma$, where $X_0= \{q_{0,1}^*\}$, and up to the position $i$ the run follows for each mapping to $?$ the transitions in $\mathcal N_1$ that were transitions for mappings to $?$ in the automaton $\mathcal N_1^*$ before the projection, i.e., all sets $X_j$ with $j \leq i$ contain only states $(q,1)$ from $ Q_1^*$, where $q\in Q_1^{\mathcal U}$. In the position $i$, where the mapping to $?$ is incorrect, the run follows the transition with $?$ in state $X_i$ of $\mathcal N_1$ that can be mapped to a transition $*_\bot$ in the automaton $\mathcal N_1^*$ which moves to a set $X_{i+1}$ with only states $(q,2)$ from $ Q_1^*$, i.e., the transition that checks whether replacing $?$ at $i$ with $\bot$ always leads to rejecting states for possible instantiations of upcoming $?$. As $\mathcal U_1^*$ is built from copies of the automaton $\mathcal U_1$ for the formula $\neg \varphi$, following the transition for $*_\bot$ means replacing at position $i$ the value $?$ with $\bot$, which can only lead to rejecting runs, because the automaton $\mathcal U_1$ accepts no sequence where $p$ is mapped to value $\bot$ at position $i$. 
	\item $\sigma \in \overline{\min(\varphi)}$ and for some $i$ and some $p \in O$, $\sigma(i)(p)$ is incorrectly mapped to $\top$ or to $\bot$. With the same argumentation of the last case over the structure of the automaton $\mathcal N_2$ the claim can be proven.
 	\item $\sigma \in \min(\varphi)$. In this case, for each position $i$, for each proposition $p \in O$ such that $\sigma(i)(p)=?$, and for each instantiation of $?$ for $p$ in position $i$, there are instantiations for all other $?$ values in $\sigma$ and for all propositions such that the resulting sequence $\sigma' \in (\nats \rightarrow 2^\AP)$ is in $L(\varphi)$. Let $r=X_0X_1 ...$ be a run of $\mathcal N_1$ on $\sigma$. If $r$ follows all transitions for a mapping to $?$ that correspond to a transition for the value $?$ in $\mathcal N_1^*$. In this case, all sets $X_j$ for $j\ge 0$ have states $(q,1)$ of $\mathcal U_1^*$ where $q \in Q_1^{\mathcal U}$ and the run is not accepting, because the run simulates a universal run tree in $\mathcal U_1^*$ with at least one non-accepting branch, because there is an instantiation for $\sigma$ that is a model of $\varphi$. 
       If at any point, then run $r$ takes a transition for some mapping to $?$ that corresponds to a transition $*_\bot$ or $*_\top$ in the automaton $\mathcal N_1^*$, then the run cannot be accepting, otherwise there is a mapping to $?$ for some proposition $p \in O $ in some position in $\sigma$ for which all other $?$ in $\sigma$ cannot be instantiated appropriately in order to get a model in $\sigma$.%
       
       In a similar way we can also prove that $\mathcal N_2$ has no accepting run for $\sigma$. 
\end{itemize}
\qed
\end{proof}
To check whether a skeleton $\mathcal S $ is a model for a given LTL formula $\varphi$ we compute the product $\mathcal P = \mathcal S \times \mathcal N$ where $\mathcal N$ is nondeterministic B\"uchi automaton with $L(\mathcal N) = \overline{\min(\varphi)}$ constructed in Lemma \ref{lem:NBAforLTL}. If $\mathcal P$ contains a path that simulates an accepting path in $\mathcal N$, then $\mathcal S$ has a path that violates the property $\varphi$, i.e., there is a sequence in the language $L(\mathcal S)$ that is not in $\min(\varphi)$. 

Instead of constructing the product automaton $\mathcal P$ one can also guess a run in $\mathcal P$ and check whether it is accepting\footnote{This follows the idea of the $\pspace$ model checking algorithm for LTL over transition systems \cite{Baier:2008:PMC:1373322}}. Based on this idea, the complexity of model checking skeleton is given by the following theorem. 

\begin{theorem}
	Checking whether a skeleton $\mathcal S$ is a model for an LTL formula $\varphi$ is in $\expspace$.
\label{lem:modelIn} 
\end{theorem}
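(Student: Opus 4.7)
The plan is to reduce the model-checking problem to the emptiness of a product Büchi automaton and solve that emptiness on the fly in exponential space. First I would invoke Lemma~\ref{lem:NBAforLTL} to obtain the nondeterministic Büchi automaton $\mathcal N$ with $L(\mathcal N)=\overline{\min(\varphi)}$, and observe that $\mathcal S\models\varphi$ iff no trace of $\mathcal S$ lies in $L(\mathcal N)$, which in turn holds iff the product $\mathcal S\times\mathcal N$ contains no accepting lasso.

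Next I would argue that a single state of $\mathcal N$ can be stored in exponential space, even though $\mathcal N$ itself is doubly exponential. The states of $\mathcal N$ arise by subset construction from the universal co-Büchi automata $\mathcal U_1^*$ and $\mathcal U_2^*$, each of size $2^{O(|\varphi|)}$; so a state of $\mathcal N$ is essentially a subset of $\mathcal U_i^*$-states, describable in $2^{O(|\varphi|)}$ bits. Since $\mathcal U_i^*$ is in turn derived, in a purely local way, from the linear-size alternating Büchi automaton for $\varphi$ (or $\neg\varphi$), its transition function is computable in exponential space, and hence so is the successor function of $\mathcal N$ under any letter in $3^{\AP}$: given a state of $\mathcal N$ and a letter, the successor can be produced in situ without materializing $\mathcal N$.

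Once this per-state bound is in place, I would run the standard guess-a-lasso algorithm for Büchi emptiness on the product, in its on-the-fly nondeterministic form: nondeterministically guess an accepting product state $(s^*,q^*)$ with $q^*\in F$, then guess a path from the initial state of $\mathcal S\times\mathcal N$ to $(s^*,q^*)$, then guess a nontrivial path from $(s^*,q^*)$ back to itself, at each point keeping only the current product state, the next nondeterministic choice, and a binary path-length counter bounded by $|\mathcal S|\cdot 2^{2^{O(|\varphi|)}}$. All these components, including the counter (which needs $2^{O(|\varphi|)}+\log|\mathcal S|$ bits), fit in exponential space. This places the problem in \npspace\ at the exponential level, i.e.\ in \nexpspace, and Savitch's theorem collapses \nexpspace\ to \expspace.

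The main obstacle I expect is the bookkeeping around $\mathcal N$: since $\mathcal N$ cannot be written down within the space budget, I must ensure that every use of $\mathcal N$ in the algorithm — the initial state, the transition relation, and the test $q\in F$ — is evaluated on the intermediate representation (the set-of-$\mathcal U_i^*$-states encoding) without ever assembling $\mathcal N$ as a whole. Granting that, the rest is exactly the textbook Büchi product-emptiness argument shifted one exponential level upwards.
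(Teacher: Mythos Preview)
Your proposal is correct and follows essentially the same approach as the paper: form the product $\mathcal S\times\mathcal N$ with the automaton of Lemma~\ref{lem:NBAforLTL} and check Büchi emptiness on the fly by guessing an accepting lasso, exactly in the spirit of the \pspace\ LTL model-checking algorithm lifted by one exponential. The paper states this argument only tersely in the paragraph preceding the theorem; your explicit justification that a single state of $\mathcal N$ is a subset of the exponentially-sized $\mathcal U_i^*$ and hence representable in $2^{O(|\varphi|)}$ bits is the key point the paper leaves implicit, and your use of Savitch to close from \nexpspace\ to \expspace\ is the standard finishing step.
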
	

\section{Synthesis of Skeletons}

For a set of atomic propositions $\AP = I \cup O$, to check whether there is $2^O$-labeled $2^I$-transition system $\mathcal T$ that satisfies a given LTL formula $\varphi$, one would construct a deterministic $\omega$-automaton $\mathcal D$ (for example a parity automaton) with $L(\mathcal D)=L(\varphi)$, interpret the automaton as a tree automaton over trees with labels from $3^O$ and directions from $2^I$ and check its emptiness. In case, the language of the automaton is not empty the procedure returns a transition system  $\mathcal T$ that models the formula $\varphi$. In the same fashion, we can construct a deterministic $\omega$-automaton for the language $\min(\varphi)$ (for example by determinizing the automaton from Lemma \ref{lem:NBAforLTL}) and check whether there is a skeleton that is a model for  $\varphi$ by performing an emptiness check over tree automaton interpretation of the deterministic automaton.

 The deterministic automaton is very expensive to construct (triple exponential in the formula $\varphi$). Instead, we show that we can avoid this construction of the large deterministic automaton using learning. In comparison to transition systems, given an LTL formula, we show that it has a unique minimal skeleton that models the formula. The language of the skeleton is a safety language, and thus, can be characterized by a bad-prefix automaton, which is a finite word automaton. We use the learning algorithm L$^*$ to learn the deterministic bad-prefix automaton \cite{Angluin:1987:LRS:36888.36889}, which can be easily transformed to a skeleton that models the formula. The learning algorithm learns the skeleton in time polynomial in the size of the minimal skeleton.

\subsection{Learning Skeletons}

In the following we present an algorithm for learning  skeletons of LTL formulas. 
Our algorithm is based on the  $\text{L}^*$ algorithm for learning deterministic finite automata introduced by Dana Angluin \cite{Angluin:1987:LRS:36888.36889}. The setting of the L$^*$ algorithm involves two key actors, the \emph{learner} and the \emph{teacher}. The learner tries to learn a language known to the teacher by learning a minimal deterministic finite word automaton for the language. The interaction between the learner and the teacher is driven by two types of queries: \emph{membership queries}, where the learner asks whether a particular word is in the language, and \emph{equivalence queries}, to check whether a learned  deterministic finite automaton indeed defines the language to be learned. Here, the teacher responds either with a ``yes'' or with a  counterexample, which is a word in the symmetric difference of the language of the learned automaton and the actual language. A teacher is called \emph{minimally adequate}, if she can answer membership and equivalence queries.   
 
\begin{theorem}{\emph{\cite{Angluin:1987:LRS:36888.36889}}}
	Given a minimally adequate Teacher for an unknown regular language $L$, we can construct a minimal finite word automaton that accepts $L$,  in time polynomial in the number of states of the automaton and the length of the largest counterexample returned by the teacher. 
\label{theo:lstar}
\end{theorem}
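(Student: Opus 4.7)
The plan is to follow Angluin's original argument, which revolves around maintaining an \emph{observation table} $(S,E,T)$ and proving an invariant that bounds its size by the number of states of the minimal DFA for $L$. Here $S$ is a prefix-closed set of access strings, $E$ is a suffix-closed set of distinguishing experiments, and $T\colon(S\cup S\cdot\Sigma)\times E\to\{0,1\}$ is filled in using membership queries: $T(u,e)=1$ iff $ue\in L$. For each $s\in S\cup S\cdot\Sigma$ the function $e\mapsto T(s,e)$ is called $\mathrm{row}(s)$. I would first define the table to be \emph{closed} when every row of $S\cdot\Sigma$ is already realized by a row of $S$, and \emph{consistent} when $\mathrm{row}(s_1)=\mathrm{row}(s_2)$ implies $\mathrm{row}(s_1a)=\mathrm{row}(s_2a)$ for all $a\in\Sigma$. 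From a closed and consistent table one builds a hypothesis DFA $M(S,E,T)$ whose states are the distinct rows of $S$, whose initial state is $\mathrm{row}(\varepsilon)$, whose transitions are $\mathrm{row}(s)\xrightarrow{a}\mathrm{row}(sa)$, and whose accepting states are those rows with $T(s,\varepsilon)=1$; closure makes the transitions well-defined and consistency makes them functional.

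Next I would describe the main loop: repeatedly fix closure and consistency violations by moving a witness string from $S\cdot\Sigma$ into $S$ (for closure) or adding a distinguishing suffix $a\cdot e$ to $E$ (for consistency), filling in new table entries by membership queries. Whenever the table is closed and consistent, pose an equivalence query on $M(S,E,T)$; if the teacher returns a counterexample $w$, add all prefixes of $w$ to $S$ and continue.

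The key invariant, which I would prove by induction on the operations, is that every distinct row of $T$ corresponds to a distinct Myhill--Nerode class of $L$. This is immediate: two strings with the same row that belonged to different Myhill--Nerode classes would be separated by some suffix in $\Sigma^*$, but once that suffix (or a fix-up of it forced by consistency) enters $E$ the rows split. Consequently the number of distinct rows is always at most the number $n$ of states of the minimal DFA, and each closure fix, consistency fix, or processed counterexample strictly increases the number of distinct rows of $S$ (for counterexamples this is Angluin's central lemma: if the hypothesis is wrong on $w$, adding all prefixes of $w$ forces a new row, because otherwise $M(S,E,T)$ would agree with $L$ on $w$). Hence there can be at most $n$ equivalence queries and at most $n$ closure/consistency repairs before each of them.

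The main obstacle, and the step that deserves the most care, is this counterexample-progress lemma: one must show that after inserting the prefixes of a counterexample $w$ of length $m$, the subsequent closure/consistency fixing must produce a strictly larger set of distinct rows, rather than looping. I would handle it by the standard contradiction argument: if no new row appeared, $M(S,E,T)$ would still classify $w$ correctly according to $T$, contradicting that $w$ witnesses disagreement with $L$. Finally, a bookkeeping count gives the complexity: at most $n$ equivalence queries, and between queries at most $O(n)$ table repairs, each performing $O(|\Sigma|\cdot|E|)$ membership queries; since $|S|\le n\cdot m$ and $|E|\le n$ throughout, the total number of queries and the running time are polynomial in $n$ and in the length $m$ of the largest counterexample, yielding the claimed bound and a DFA which, being constructed from the Myhill--Nerode classes it has discovered, is minimal. \qed
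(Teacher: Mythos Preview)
Your proposal is a reasonable reconstruction of Angluin's original argument, but there is nothing to compare it against here: the paper does not prove this theorem at all. The statement is imported verbatim as a cited result from \cite{Angluin:1987:LRS:36888.36889} and is used as a black box in the subsequent learning-based synthesis algorithm. So while your sketch is essentially correct (it is Angluin's own proof outline, with the observation-table invariant and the counterexample-progress lemma as the two substantive steps), the paper's ``proof'' is simply the citation.
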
 
 
For an LTL formula $\varphi$ we show that the language of a skeleton that satisfies $\varphi$ is a safety language. This can be characterized by a language over finite words, namely the language of bad-prefixes.  The L$^*$ algorithm can learn a finite automaton for the language of bad-prefixes, which in turn can then be transformed to a skeleton for the property $\varphi$. 
 
\begin{lemma}
	For an LTL formula $\varphi$, the language $\min(\varphi)$ is a safety language. 
\label{lem:safetymin}
\end{lemma}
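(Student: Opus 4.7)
Proof plan. To establish safety of $\min(\varphi)$, I must show that any open sequence $\sigma \not\in \min(\varphi)$ admits a finite prefix no infinite extension of which lies in $\min(\varphi)$.

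First I would make the membership condition explicit. Let $\mu(\varsigma,i,p) \in \{\top,\bot,?\}$ denote the canonical minimum value of an output proposition $p \in O$ at position $i$ given an input sequence $\varsigma \in (\nats \to 2^I)$: it is $\top$ (resp.\ $\bot$) iff every two-valued model of $\varphi$ with input $\varsigma$ assigns value $\top$ (resp.\ $\bot$) to $p$ at position $i$, and $?$ otherwise. Then $\sigma \in \min(\varphi)$ iff $\sigma(i)(p) = \mu(\sigma_I,i,p)$ for every $i \in \nats$ and $p \in O$. Consequently, if $\sigma \not\in \min(\varphi)$, I pick the smallest position $i$ witnessing a mismatch, together with an offending $p \in O$, and take $w = \sigma(0)\cdots\sigma(i)$ as the candidate bad prefix.

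Next I would verify that $w$ is actually bad. For any infinite extension $\tilde\sigma$ of $w$, we have $\tilde\sigma(i)(p) = \sigma(i)(p)$ and $\tilde\sigma_I$ agrees with $\sigma_I$ on positions $0,\dots,i-1$, so it suffices to argue that $\mu(\tilde\sigma_I,i,p)$ must still disagree with $\sigma(i)(p)$. For this I would exploit the $\omega$-regularity of $L(\varphi)$: fix a deterministic parity automaton $\mathcal D$ for $L(\varphi)$ and read off the canonical minimum at $(i,p)$ from the state of $\mathcal D$ reached after processing the first $i$ input/output symbols---namely, the value is $\top$ (resp.\ $\bot$) precisely when every accepting continuation from that state takes value $\top$ (resp.\ $\bot$) at position $i$. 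Since the reached state is determined by $w$ alone, every extension of $w$ produces the same canonical value at $(i,p)$. Finiteness of $\mathcal D$ then yields a finite bad-prefix automaton whose complement is exactly $\min(\varphi)$, witnessing safety.

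The main obstacle is the locality step in the previous paragraph. In its naive form the claim that $\mu(\varsigma,i,p)$ depends only on the prefix $\varsigma(0)\cdots\varsigma(i-1)$ can fail, since the truth of liveness subformulas of $\varphi$ is in principle sensitive to the tail of the input sequence. The real work is therefore to exploit the finite-state structure of $\mathcal D$ with care, tracking not bare states but the family of states reachable under all future input/output continuations, so that the ``determinacy profile'' at $(i,p)$ becomes a finite function of the prefix $w$. Carrying out this subset-style abstraction and verifying that it really coincides with the model-theoretic definition of $\mu$ is where the technical heart of the argument resides; once in place, safety of $\min(\varphi)$ follows immediately by reading off a bad-prefix automaton from the resulting finite abstraction.
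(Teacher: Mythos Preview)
Your instinct about the ``locality obstacle'' is exactly right, but the obstacle cannot be removed by any finite-state abstraction: it is fatal to the lemma as stated. Take $\varphi = p \leftrightarrow \LTLdiamond q$ with $p\in O$ and $q\in I$. Then $\mu(\varsigma,0,p)=\top$ if $q$ occurs somewhere in $\varsigma$ and $\mu(\varsigma,0,p)=\bot$ otherwise; for $i\ge 1$ the value is always~$?$. Now let $\sigma$ have $\sigma(0)(p)=\top$, $\sigma(i)(p)={?}$ for $i\ge 1$, and $\sigma_I=\emptyset^\omega$. Since $\mu(\sigma_I,0,p)=\bot$, we have $\sigma\notin\min(\varphi)$. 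Yet every finite prefix $\sigma(0)\cdots\sigma(n)$ admits an extension $\tilde\sigma$ with $q$ true at position $n{+}1$ and $\tilde\sigma(i)(p)={?}$ for $i>n$; for this $\tilde\sigma$ one has $\mu(\tilde\sigma_I,0,p)=\top$ and $\mu(\tilde\sigma_I,i,p)={?}$ for $i\ge 1$, so $\tilde\sigma\in\min(\varphi)$. Hence $\sigma$ has no bad prefix and $\min(\varphi)$ is not a safety language. Your proposed ``determinacy profile as a finite function of $w$'' cannot hold here, because the profile at $(0,p)$ genuinely depends on the entire input tail; tracking subsets of automaton states does not change that.

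The paper's own proof is strictly less careful than yours: it simply declares that the length-$(i{+}1)$ prefix with the same inputs as $\sigma$ and the offending output value at position $i$ is already a bad prefix, without ever confronting the dependence of $\mu(\varsigma,i,p)$ on future inputs that you flagged. So the gap you noticed is real and present in the paper as well; your subset-construction patch does not close it. What \emph{is} true is the weaker statement that if a skeleton $\mathcal S$ with $L(\mathcal S)=\min(\varphi)$ exists, then $\min(\varphi)$ is trivially safety (because the output at each position is then determined by the input \emph{prefix} via the state of $\mathcal S$); but that hypothesis is exactly what the surrounding synthesis algorithm is trying to decide, so it cannot be assumed here.
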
 
\begin{proof}
	We show that every $\sigma \in \overline{\min(\varphi)}$ has a bad-prefix. We distinguish two cases for $\sigma$: 
	\begin{itemize}
		\item There is a point $i$ in $\sigma$ and a proposition $p$ such that $\sigma(i)(p)=\top(\text{or }\bot)$ and there is a sequence $\sigma' \in L(\varphi)$ with $\sigma_I=\sigma'_I$ and $\sigma'(i)(p)=\bot(\text{or} \top)$. Thus, any finite sequence $v_0 \ldots v_i \in (3^{\AP})^*$ with $(v_0 \ldots v_i)_I = (\sigma(0) \ldots \sigma(i))_I$ and $v_i(p)\not = ?$ is a bad-prefix for $\min(\varphi)$.
		\item There is a point $i$ in $\sigma$ and a proposition $p$ such that $\sigma(i)(p)=?$ and for all $\sigma' \in L(\varphi)$ with $\sigma_I=\sigma'_I$ we have $\sigma'(i)(p)$	 is solely $\top$ or solely $\bot$. In this case, every finite sequence $v_0 \ldots v_i \in (3^{\AP})^*$ with $(v_0 \ldots v_i)_I = (\sigma(0) \ldots \sigma(i))_I$ and $v_i(p) = ?$ is a bad-prefix for $\min(\varphi)$.
	\end{itemize}
	\qed
\end{proof}

From the last lemma we deduce, that a skeleton $\mathcal S$ for an LTL formula $\varphi$ can be seen as a safety automaton that accepts the language of minimal satisfying open sequences for $\varphi$. In particular, there is a bad-prefix automaton $\mathcal B$ that accepts the language of bad-prefixes of the  language $\min(\varphi)$.

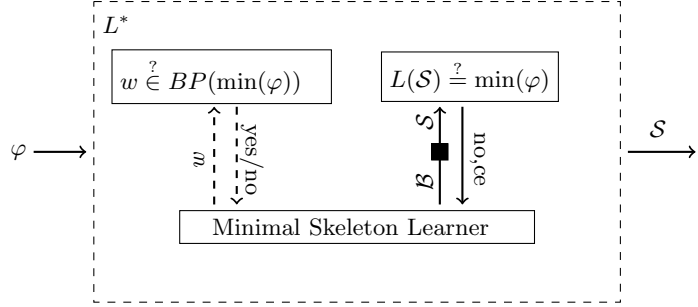
\begin{figure}[t]
\centering
\begin{tikzpicture}
	\node [] at (0,0){$\varphi$};
	\draw [thick, ->] (0.2,0) -- (0.9,0);
	\draw [dashed] (1,2) -- (8,2) --(8,-2) --(1,-2)--(1,2);
	\node [] at (1.3,1.7){$L^{*}$};
	\node [draw,text width = 2.7cm] at (2.7,1){ $w \stackrel{?}{\in} BP{(\min(\varphi))}$};
	\node [draw] at (6,1){  $L(\mathcal S) \stackrel	{?}{=} \min(\varphi)$};
	\node [draw, text width = 4.5cm] at (4.5,-1){~~~Minimal Skeleton Learner};
	\draw[thick, dashed, ->] (2.6,-0.7)--(2.6,0.6);
	\node[rotate=90] at (2.4,-.1){\emph{w}};
	\draw[thick, dashed, ->] (2.9,0.6)--(2.9,-0.7);
	\node[rotate=-90] at (3.1,-.1){yes/no };
	\draw[thick, ->] (5.6,-0.7)--(5.6,0.6);
	\node[rotate=90] at (5.4,.4){$\mathcal S$};
	\node[draw, fill=black] at (5.6,0){};
	\node[rotate=90] at (5.4,-.4){$\mathcal B$};
	\draw[thick, ->] (5.9,0.6)--(5.9,-0.7);
	\node[rotate=-90] at (6.1,-.1){{no,ce}};
	\draw [thick, ->] (8.1,0) --(9,0);
	\node [] at (8.5,0.3){$\mathcal S$}; 
\end{tikzpicture}
\caption{A modified L$^*$ for learning minimal skeletons of LTL formulas}
\label{fig:LStar}
\end{figure}

We use the L$^*$ algorithm to learn a deterministic bad-prefix automaton for the language $\min(\varphi)$.
Figure \ref{fig:LStar} shows a high level flow graph of the  learning algorithm\footnote{For more details on the L$^*$ algorithm we refer the reader to \cite{Angluin:1987:LRS:36888.36889}.}. The learner poses a series of membership questions before making a conjecture about the bad-prefix automaton. With a membership query the learner asks whether a finite word $w\in (3^{\AP})^*$ is a bad-prefix for $\min(\varphi)$. If $w$ is a bad-prefix then the teacher returns \emph{yes}, and \emph{no} otherwise. The equivalence queries allow the learner to check whether a skeleton $\mathcal S$ is correct, i.e., $L(S)=\min(\varphi)$. The teacher either confirms the automaton or returns a counterexample to the learner. The latter is either a bad-prefix that is not rejected by $\mathcal B$ or word $w\in (3^{\AP})^*$ that is not a bad-prefix for $\min(\varphi)$ yet is in the language of $\mathcal B$. The black box shown in Figure \ref{fig:LStar} between the bad-prefix automaton and a skeleton, is a check whether the safety language characterized by the bad-prefix automaton can be represented by a skeleton. We will refer to this check as the \emph{output consistency check} and will explain it later in more detail. 

The skeleton returned by the learning procedure is minimal and it is unique. 

\begin{lemma}
	For each LTL formula $\varphi$ there is a unique (up to isomorphism) minimal skeleton $\mathcal S$ such that $\mathcal S \models \varphi$.
\label{lem:uniSke} 
\end{lemma}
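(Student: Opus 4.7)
The plan is to obtain the minimal skeleton as a Myhill--Nerode quotient, exploiting two structural facts about $\min(\varphi)$: it is a safety language (Lemma~\ref{lem:safetymin}), and for every input sequence $\varsigma \in (\nats\rightarrow 2^I)$ it contains exactly one sequence whose input projection is $\varsigma$ (the unique minimal satisfying open sequence with that input). Together these say that $\min(\varphi)$ is the graph of a total, causal function $f_\varphi$ from input sequences to open output sequences in $(3^O)^\omega$, and that $f_\varphi$ is ``safety-continuous'' in the sense that a bad-prefix on the open output side is witnessed in finite time.

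First, I would observe that any skeleton $\mathcal S$ with $L(\mathcal S)=\min(\varphi)$ must be input-deterministic: if two initial paths agreed on all inputs but reached differently labeled states, their traces would give two different open output sequences for the same input, contradicting the uniqueness of the minimal satisfying open sequence. Hence the state reached by any input prefix is uniquely determined, and the labeling function $o$ is forced: the open set $o(t)$ at the state reached by an input prefix $u$ is exactly the ``next letter'' of $f_\varphi$ after $u$.

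Second, I would define on input prefixes the equivalence
\[
u \sim_\varphi v \iff \forall w \in (2^I)^\omega.\ f_\varphi(uw)[|u|..] = f_\varphi(vw)[|v|..],
\]
i.e., $u$ and $v$ are Myhill--Nerode equivalent with respect to the function $f_\varphi$. Any skeleton $\mathcal S$ modeling $\varphi$ induces a map from input prefixes to states by following transitions from $t_0$; by input-determinism this map factors through $\sim_\varphi$, and by the forced labeling above $\sim_\varphi$-equivalent prefixes must be sent to states with identical open output labels and transition behavior. Quotienting $\mathcal S$ by the relation ``two states are identified whenever they are reached by $\sim_\varphi$-equivalent input prefixes'' therefore yields a well-defined skeleton $\mathcal S/{\sim_\varphi}$ that still has language $\min(\varphi)$ and is at most as large as $\mathcal S$. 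This quotient depends only on $\varphi$, not on $\mathcal S$.

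Finally, I would argue minimality and uniqueness. The number of states of any skeleton modeling $\varphi$ is at least the number of $\sim_\varphi$-classes, so the quotient is the unique minimal size. Given two minimal skeletons $\mathcal S_1,\mathcal S_2$ modeling $\varphi$, the canonical maps from input prefixes to their states must be exactly the quotient maps onto $\sim_\varphi$-classes, so the composition $\mathcal S_1 \to (2^I)^*/\sim_\varphi \to \mathcal S_2$ is a state bijection preserving the initial state, the transition function, and the labeling; this is the desired isomorphism. The main obstacle is the output consistency subtlety already alluded to in the paragraph introducing the ``output consistency check'' before the lemma: one must check that the forced labeling $o(t)$ is a well-defined element of $3^O$ (i.e., the same open letter really is shared by all input prefixes reaching the quotient class), which is exactly what $\sim_\varphi$ guarantees, and that each $\sim_\varphi$-class is reachable from the class of $\epsilon$ by a $2^I$-transition, which follows because $f_\varphi$ is total on input sequences.
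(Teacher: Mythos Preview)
Your approach is correct and takes a genuinely different route from the paper. The paper argues directly: given two minimal skeletons $\mathcal S, \mathcal S'$ of the same size, it defines $\beta = \{(s,s') \mid \forall u \in (2^I)^*.\ \tau^*(s_0,u)=s \leftrightarrow \tau'^*(s'_0,u)=s'\}$, observes that this is a bijection respecting transitions (using that $\tau^*, \tau'^*$ are total functions on input prefixes and that both systems are minimal of equal size), and then checks that labels agree because $L(\mathcal S)=L(\mathcal S')$. Your Myhill--Nerode construction is more structural: you exhibit the canonical minimal skeleton as the quotient $(2^I)^*/{\sim_\varphi}$ and derive uniqueness from its universal property. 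This buys a concrete description of the minimal object that the paper's proof does not provide, and makes the size lower bound explicit.

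One small slip to fix: you write that the map from input prefixes to states ``factors through $\sim_\varphi$'', i.e., that $u \sim_\varphi v$ implies they reach the same state. That holds only in a minimal skeleton. The direction you actually need, and implicitly use when forming the quotient, is the converse: if $u,v$ reach the same state then $u \sim_\varphi v$ (their futures in $\mathcal S$ coincide, hence so do the tails of $f_\varphi$). This yields a well-defined surjection from reachable states onto $\sim_\varphi$-classes, which simultaneously gives the lower bound $|S| \ge |(2^I)^*/{\sim_\varphi}|$ and the well-definedness of the quotient's transitions and labels. Two further minor remarks: skeletons are input-deterministic by definition ($\tau: T \times 2^I \to T$ is a function), so that part of your first step is redundant, though the observation that $o(t)$ is forced by $f_\varphi$ is exactly the right ingredient; and the output-consistency discussion in your final paragraph is not needed here, since you are quotienting an existing skeleton rather than an arbitrary safety automaton.
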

\begin{proof}
	Let $\mathcal S=(S,s_0,\tau,o)$ and $\mathcal S'=(S',s_0',\tau',o')$ be two minimal skeletons for $\varphi$, i.e, $|S|=|S'|=c$ and there is no  skeleton $\mathcal S''=(S'',s''_0,\tau'',o'')$ for $\varphi$ with $|S''|<c$. We show that $\mathcal S$ and $\mathcal S'$ define the same skeleton up to isomorphism. Let $\beta = \{(s,s')\in S\times S'~|~ \forall \sigma_I \in (2^I)^*.~ \tau^*(s0,\sigma_I)=s \leftrightarrow \tau'^*(s'_0,\sigma_I)=s'  \}$. The relation $\beta$ is bijective because $\tau^*$ and $\tau'^*$ are both functional and complete. Thus, there is a one-to-one mapping between the states of $\mathcal S$ and those of $\mathcal S'$, and for each $(s_1,i,s_2)\in \tau$ we have $(\beta(s_1),i,\beta(s_2))\in \tau'$. For each $(s,s')\in \beta$ it is also the case that $o(s)=o'(s')$, otherwise, there is an input sequence that distinguishes a trace in $\mathcal S$ from the corresponding one in $\mathcal S'$, which contradicts the assumption that $L(\mathcal S) = L(\mathcal S')$. This implies that $\mathcal S$ is isomorphic to $\mathcal S'$. \qed
\end{proof}

In the next sections we show how membership and equivalence queries can be solved algorithmically. 

\subsection{Membership Queries}

In this section we show that using the ideas of the automaton presented in Lemma \ref{lem:NBAforLTL} we can check whether a word is a bad-prefix in space exponential  in the length of $\varphi$.

\begin{theorem}
	Given an LTL formula $\varphi$ and a finite word $w \in (3^\AP)^*$, checking whether $w$ is a bad-prefix for $\min(\varphi)$ is in $\expspace$. 
\label{theo:expBad}
\end{theorem}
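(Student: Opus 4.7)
The plan is to reformulate the membership question as an emptiness check on an alternating B\"uchi automaton of size exponential in $|\varphi|$. First I would note that $w$ is a bad-prefix for $\min(\varphi)$ iff no sequence in $\min(\varphi)$ begins with $w$, i.e., $\min(\varphi) \cap w \cdot (3^\AP)^\omega = \emptyset$. To characterise $\min(\varphi)$ by a compact alternating automaton, I would reuse the building blocks from Lemma~\ref{lem:NBAforLTL} but avoid the final subset construction that caused the doubly-exponential blow-up there. Let $\mathcal U_1^{*,d}$ and $\mathcal U_2^{*,d}$ be the nondeterministic B\"uchi automata obtained by dualising $\mathcal U_1^*$ and $\mathcal U_2^*$, and let $\mathcal B := \mathcal U_1^{*,d} \cap \mathcal U_2^{*,d}$, an NBA over the extended alphabet $\{\top,\bot,?,*_\top,*_\bot\}^\AP$ of exponential size in $|\varphi|$. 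Unwinding the definitions of $\mathcal N_1$ and $\mathcal N_2$ as alphabet projections of $\mathcal U_1^*$ and $\mathcal U_2^*$ shows that $\sigma \in \min(\varphi)$ iff every extension $\sigma^*$ of $\sigma$ to the extended alphabet lies in $L(\mathcal B)$.

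From $\mathcal B$, I would construct an alternating B\"uchi automaton $\mathcal D$ over $3^\AP$ with the same state set but with transitions that \emph{universalise} the alphabet projection: for every state $q$ and letter $a \in 3^\AP$, define $\delta_{\mathcal D}(q, a) := \bigwedge_{a^* \in \mathrm{ext}(a)} \delta_{\mathcal B}(q, a^*)$, where $\mathrm{ext}(a)$ is the set of extended letters that project to $a$. A run-tree argument then shows that each infinite branch of an accepting $\mathcal D$-run tree corresponds to exactly one extension $\sigma^*$ and carries an accepting $\mathcal B$-run on it, so $L(\mathcal D) = \min(\varphi)$ and $|\mathcal D|$ is still exponential in $|\varphi|$. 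Intersecting $\mathcal D$ with a deterministic chain of $|w|+1$ states that forces the first $|w|$ input symbols to match $w$ produces an alternating B\"uchi automaton $\mathcal D'$ with $L(\mathcal D') = \min(\varphi) \cap w\cdot(3^\AP)^\omega$ of size exponential in $|\varphi|$ plus polynomial in $|w|$.

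Finally, $w$ is a bad-prefix iff $L(\mathcal D') = \emptyset$. Since non-emptiness of an alternating B\"uchi automaton with $n$ states is decidable in space polynomial in $n$ (via the Miyano--Hayashi translation to an NBA of size $2^{O(n)}$ followed by the standard logspace-in-NBA-size emptiness check), applying this to $\mathcal D'$ uses space exponential in $|\varphi|$, placing the problem in \expspace{}. The hard part will be arguing cleanly that the universalised projection is correct: one must keep in mind that different universal branches in a run tree may make different existential $\mathcal B$-successor choices, yet each individual branch has to be an accepting $\mathcal B$-run on its own extension $\sigma^*$. Absorbing the projection \emph{existentially} instead, as in the proof of Lemma~\ref{lem:NBAforLTL}, would force a subset construction and a doubly-exponential state space; it is precisely this detour that the tighter \expspace{} bound avoids.
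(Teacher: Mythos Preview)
Your reduction has a genuine gap in the ``universalised projection'' step. You claim that for the nondeterministic B\"uchi automaton $\mathcal B$, the alternating automaton $\mathcal D$ with $\delta_{\mathcal D}(q,a)=\bigwedge_{a^*\in\mathrm{ext}(a)}\delta_{\mathcal B}(q,a^*)$ satisfies $L(\mathcal D)=\{\sigma:\forall\sigma^*\in\mathrm{ext}(\sigma).\ \sigma^*\in L(\mathcal B)\}$. Only the inclusion $\subseteq$ holds. For the converse, knowing that \emph{each} extension $\sigma^*$ admits \emph{some} accepting $\mathcal B$-run does not let you assemble a single accepting $\mathcal D$-run tree: at every node you must fix the existential $\mathcal B$-successor once, and this choice must then be good against \emph{all} future universal branchings. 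A three-state safety NBA already breaks it: take $\mathcal B$ over $\{a,b\}$ with $\delta(q_0,x)=q_a\vee q_b$, $\delta(q_a,a)=\top$, $\delta(q_a,b)=\mathrm{false}$, $\delta(q_b,b)=\top$, $\delta(q_b,a)=\mathrm{false}$, $\delta(\top,x)=\top$; then $L(\mathcal B)=\{a,b\}^\omega$, yet with a single projected letter $c$ one gets $\delta_{\mathcal D}(q_a,c)=\top\wedge\mathrm{false}=\mathrm{false}$ and $L(\mathcal D)=\emptyset$. So your emptiness test on $\mathcal D'$ would wrongly declare words bad-prefixes. This interaction between nondeterminism and universal projection is exactly why Lemma~\ref{lem:NBAforLTL} determinises via a subset construction before projecting; your construction does not escape that cost, it only hides it. A smaller point: $\mathcal N_1$ and $\mathcal N_2$ use \emph{different} projections ($*_\top,*_\bot\mapsto{?}$ versus $*_\top\mapsto\top$, $*_\bot\mapsto\bot$), so there is no single $\mathrm{ext}$ relation that serves both components of $\mathcal B$ simultaneously.

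The paper's proof takes a different, more direct route: it never builds $\mathcal D$ at all. It works on-the-fly with the universal co-B\"uchi automaton $\mathcal U$ for $\neg\varphi$ (singly exponential), guesses the offending position inside $w$, and from there alternates---existentially guessing the input continuation, universally ranging over output valuations---while tracking only the current \emph{set} of $\mathcal U$-states. Because this subset is of exponential size and a loop must close within $2^{|Q|}$ steps, the whole search fits in exponential space. In effect the paper performs the subset construction lazily on a single guessed path rather than materialising the doubly-exponential automaton.
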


\begin{proof}
	 A finite word $w \in (3^\AP)^*$ is a bad-prefix for $\min(\varphi)$ if $w=w_0 \dots w_n$ has a prefix and there is a sequence of input values $\varsigma$ and no sequence $\sigma: \nats \rightarrow 3^\AP$ with  $\sigma_I=\varsigma$ can extend $w$ to a sequence in $w \cdot \sigma \in \min(\varphi)$. Let $\mathcal U = (\Sigma, Q,q_0,\delta,F)$ be a universal co-B\"uchi automaton such that $L(\mathcal U)= L(\neg \varphi)$. The idea is to iteratively construct a run of the automaton $\mathcal U$ and check if the run is accepting (remember that a run of $\mathcal U$ is $Q$-tree). Given the input word $w$, we first guess which position $i$ of $w$ contains a wrong mapping and compute the set of states of the run tree over $w_0 \dots w_i$ reached at this position. Then, we compute the set of states reached via choosing the transition for which the guessed position $i$ is wrong. Form here on, we guess the next input and branch universally for all valuations of the output propositions, and compute the next set of reached states. This is repeated $2^{|Q|}$ times (At latest at position $2^{|Q|}$ we reach a set of states, that was seen before and enter a loop in the run). If during the procedure a valid accepting configuration of the universal automaton was guessed, then we have  found a sequence of inputs $\varsigma$ for which no $\sigma$ with $\sigma_I=\varsigma$  extends the prefix of $w_0 \dots w_i$ to a sequence in $\min(\varphi)$. Thus, $w$ is a bad-prefix for $\min(\varphi)$. In each step we only need to remember the currently reached set of states of $\mathcal U$, and whether we have seen an accepting configuration of $\mathcal U$. Furthermore, the number of iteration can be encoded in binary and is polynomial in the size of $\mathcal U$, which in turn is exponential in the length of $\varphi$. \qed
\end{proof}

\subsection{Equivalence Queries}

We move now to equivalence queries. To check whether a skeleton is a model for a formula $\varphi$ we apply the model checking algorithm presented in Section 4. The learning algorithm first constructs a bad-prefix automaton for the language $\min(\varphi)$. We show that this automaton can be turned into a safety automaton for $\min(\varphi)$ on which we can simulate a skeleton for $\varphi$. In case we cannot simulate the skeleton on top of the safety automaton, then there is no skeleton that models the formula $\varphi$. 

\begin{lemma}
	Given a deterministic bad-prefix automaton $\mathcal B$ for a safety property $\varphi$, we can construct a deterministic safety automaton $\mathcal S$ for $\varphi$ in time linear in the size of $\mathcal B$. 
\label{lem:B2S}
\end{lemma}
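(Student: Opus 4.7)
My plan is to observe that, because $L(\varphi)$ is a safety language, every extension of a bad prefix is again a bad prefix. Consequently, in any deterministic bad-prefix automaton $\mathcal B = (Q_{\mathcal B}, q_{\mathcal B,0}, F_{\mathcal B}, \delta_{\mathcal B})$ we may assume, after a linear-time preprocessing pass, that every state in $F_{\mathcal B}$ is a sink (merge all accepting states into a single self-looping state $q_{\mathit{bad}}$ and redirect any transition entering $F_{\mathcal B}$ to $q_{\mathit{bad}}$). This preserves $L(\mathcal B) = \bp{L(\varphi)}$ and takes time linear in $|\mathcal B|$.

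Next I would define the safety automaton
\[
\mathcal S \;=\; (\Sigma,\; Q_{\mathcal B}\setminus F_{\mathcal B},\; q_{\mathcal B,0},\; \delta_{\mathcal S})
\]
where $\delta_{\mathcal S}(q,a) = \delta_{\mathcal B}(q,a)$ whenever $\delta_{\mathcal B}(q,a) \notin F_{\mathcal B}$, and $\delta_{\mathcal S}(q,a)$ is left undefined otherwise. (If $q_{\mathcal B,0} \in F_{\mathcal B}$, then $L(\varphi) = \emptyset$ and we output the trivial empty safety automaton.) Per the paper's conventions, $\mathcal S$ has $Q = F$, so every complete run of $\mathcal S$ is accepting; a word is accepted exactly when $\mathcal S$ does not get stuck. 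Both the state removal and the transition restriction are performed in one linear pass, so the overall construction is linear in $|\mathcal B|$.

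The correctness step to carry out is the chain of equivalences: an infinite word $\sigma \in \Sigma^\omega$ is accepted by $\mathcal S$ iff the (unique, because $\mathcal B$ is deterministic) run of $\mathcal B$ on $\sigma$ never visits $F_{\mathcal B}$ iff no finite prefix of $\sigma$ is a bad prefix for $L(\varphi)$ iff $\sigma \in L(\varphi)$, using the safety property of $L(\varphi)$ in the last step (if $\sigma \notin L(\varphi)$, safety guarantees a bad prefix, forcing the run of $\mathcal B$ into $F_{\mathcal B}$).

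There is no real obstacle here; the only subtlety is handling the preprocessing so that bad states are sinks, which is required for the semantic equivalence above to go through cleanly, and remembering to treat the degenerate case $q_{\mathcal B,0} \in F_{\mathcal B}$ separately.
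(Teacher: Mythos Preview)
Your proposal is correct and follows essentially the same approach as the paper: delete the accepting states of $\mathcal B$ (and the transitions into them) to obtain the safety automaton. The paper phrases the cleanup slightly differently---it assumes $\mathcal B$ is complete, removes $F$, and then iteratively removes any resulting sink states---whereas you first normalize $F$ into a single absorbing sink before deletion; both variants are linear and yield the same automaton, and your explicit correctness chain is more detailed than what the paper provides.
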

\paragraph{Construction.}
Let $\mathcal B = (\Sigma, Q,q_0,F,\delta)$ be a bad-prefix automaton for some property $\varphi$ and we assume it is complete. We construct a safety automaton $\mathcal S=(\Sigma,Q',q_0',\delta')$ for $\varphi$ by first removing all states in $F$ and then by iteratively removing all resulting sink states in the automaton.

\begin{remark}
Note that if $\mathcal B$ is minimal, so is $\mathcal S$.	
\end{remark}

Before we move on to the construction we consider following fact about skeletons and the language $\min(\varphi)$ for some formula $\varphi$.  Let $\AP=O\cup I$ be the set of atomic propositions. Let $\mathcal S=(S,s_0,\tau,o)$ be a skeleton that models the formula $\varphi$. Let $\pi_1=(s_0,i_1)(s_1,i_1) \dots$ and $\pi_2=(s_0,i_1)(s_1,i_2) \dots $ be paths in $\mathcal S$ where $s_0,s_1 \in S$ and $i_1,i_2 \in I$.  Then, both sequences $\sigma_{\pi_1}=(o(s_0)\cup i_1)(o(s_1)\cup i_1) \dots$ and  $\sigma_{\pi_2}=(o(s_0)\cup i_1)(o(s_1)\cup i_2)\dots$, must be in the set $\min(\varphi)$, otherwise $\mathcal S$ is not a model of $\varphi$. This means, if the language $\min(\varphi)$ contains sequences $(o_1 \cup i_1)(o_2\cup i_1) \dots $ and $(o_1 \cup i_1)(o'_2 \cup i_2) \dots$ with $o_2 \not = o'_2$ then there is no skeleton that models $\varphi$, because $\min(\varphi) = L(\mathcal S)$ and both traces cannot be trace of the skeleton at the same time. 

\begin{definition}[Output Consistent]For a set of atomic propositions $\AP=O\cup I$, a safety automaton $\mathcal A=(3^\AP,Q,q_0,\delta)$ is output consistent, if for each state $q\in Q$ there is a unique mapping $v \in \{\bot,\top,?\}^O$ and for all transitions $(q,v',q') \in \delta$, $v'(p)=v(p)$ for all propositions $p \in O$.   
\end{definition}

\begin{lemma}
	 Given an LTL formula $\varphi$, if there is an output consistent safety automaton $\mathcal A$ for the language $\min(\varphi)$, we can transform $\mathcal A$ to skeleton $\mathcal S$ that models $\varphi$. The size of $\mathcal S$ is equal to the size of $\mathcal A$. 
\label{lem:A2SK}
\end{lemma}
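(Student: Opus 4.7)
The plan is to exhibit an explicit transformation $\mathcal A \mapsto \mathcal S$ that keeps the state space and reuses the fixed output labelling guaranteed by output consistency. Concretely, given $\mathcal A = (3^\AP, Q, q_0, \delta)$, I would set $\mathcal S = (Q, q_0, \tau, o)$ where $o(q)$ is the unique mapping in $3^O$ associated with $q$ by the definition of output consistency, and, for each $i \in 2^I$, $\tau(q, i)$ is the unique state $q'$ with $(q, o(q) \cup i, q') \in \delta$. Uniqueness of $q'$ is inherited from determinism of $\mathcal A$, which comes from Lemma~\ref{lem:B2S} applied to the deterministic bad-prefix automaton produced by L$^*$. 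The size bound $|\mathcal S| = |\mathcal A|$ then holds by construction.

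Next, I would verify that $\tau$ is total on reachable states. If some reachable $q$ had no $\delta$-successor on $o(q) \cup i$, then the corresponding input prefix (reaching $q$, then choosing $i$) would have no extension in $L(\mathcal A) = \min(\varphi)$, contradicting the fact that $\min(\varphi)$ contains a minimal satisfying sequence for every input sequence $\varsigma \in (\nats \to 2^I)$ whenever $\varphi$ is realizable---the only regime in which a skeleton for $\varphi$ can exist at all. Restricting to reachable states therefore yields a well-defined, total transition function without changing the language.

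Finally, I would prove $L(\mathcal S) = \min(\varphi)$ by matching paths of $\mathcal S$ with runs of $\mathcal A$. For $L(\mathcal S) \subseteq \min(\varphi)$: any initial path $\pi = (q_0, i_0)(q_1, i_1)\ldots$ of $\mathcal S$ induces a run $q_0 q_1 \ldots$ of $\mathcal A$ on $\sigma_\pi$ with $\sigma_\pi(j) = o(q_j) \cup i_j$; since $\mathcal A$ is a safety automaton every run is accepting, so $\sigma_\pi \in L(\mathcal A) = \min(\varphi)$. For the reverse inclusion: given $\sigma \in \min(\varphi)$, the unique run $q_0 q_1 \ldots$ of the deterministic $\mathcal A$ on $\sigma$ satisfies $\sigma(j)_O = o(q_j)$ by output consistency, and hence the initial path of $\mathcal S$ obtained by feeding in $\sigma(0)_I, \sigma(1)_I, \ldots$ has trace exactly $\sigma$.

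The main obstacle is the totality of $\tau$: the output-consistent safety automaton need not be complete on $3^\AP$, and input-dead reachable states would break the skeleton definition. This is resolved by appealing to the input-completeness of $\min(\varphi)$ for realizable $\varphi$; the unrealizable case makes the lemma vacuous because no skeleton modelling $\varphi$ exists. The remaining arguments are routine bookkeeping on runs and paths.
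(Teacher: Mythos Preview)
Your construction is exactly the paper's: keep the states of $\mathcal A$, set $o(q)$ to the unique $3^O$-label guaranteed by output consistency, and let $\tau(q,i)$ follow the $\delta$-transition on $o(q)\cup i$; the paper then justifies $\mathcal S\models\varphi$ with the single line that $\mathcal S$ ``simulates the language of $\mathcal A$.'' Your treatment is more careful---you argue both inclusions of $L(\mathcal S)=\min(\varphi)$ and address totality of $\tau$---but the approach is the same, and the realizability caveat is harmless (totality already follows from the lemma's hypothesis together with the paper's stipulation that $\min(\varphi)$ contains a sequence for every input sequence).
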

\paragraph{Construction.} Let $\varphi$ be an LTL formula and let $\mathcal A= (3^\AP, Q,q_0,\delta)$ be an output consistent safety automaton for the language $\min(\varphi)$ constructed from a deterministic bad-prefix automaton as in Lemma \ref{lem:B2S}.   
Let $Q=\{q_0,q_1 \dots q_n\}$. We can construct a skeleton $\mathcal S= (S,s_0,\tau,o)$, where $ S=\{s_0, \dots, s_n\}$ and $o(s_i)= X\cap O$ for $(q_i,X,q') \in \delta$ for some $q'\in Q$, and $(s_i, Y, s_j) \in \tau$ for $Y\subseteq I$ when $(q_i, o(s_i)\cup Y, q_j) \in \delta$.    The skeleton $\mathcal S$ models $\varphi$, because it simulates the language of $\mathcal A$. 

\begin{lemma}
Given a formula $\varphi$, if an output consistent safety automaton $\mathcal A$ with $L(\mathcal A)=\min(\varphi)$ is minimal then the skeleton $\mathcal S $ extracted form $\mathcal A$ is also minimal.
\label{lem:minSke}	
\end{lemma}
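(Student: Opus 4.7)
The plan is to prove the lemma by exhibiting a size-preserving correspondence between skeletons modeling $\varphi$ and output consistent safety automata for $\min(\varphi)$. Lemma~\ref{lem:A2SK} already supplies one direction of this correspondence (automaton to skeleton, with equal size), so I only need to build the converse direction and then argue by contradiction against the minimality of $\mathcal A$.

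First I would describe a converse construction that turns any skeleton $\mathcal S' = (S', s'_0, \tau', o')$ modeling $\varphi$ into an output consistent safety automaton $\mathcal A' = (3^\AP, S', s'_0, \delta')$ with $|\mathcal A'| = |S'|$. The transition function is defined by $\delta'(s, v) = \tau'(s, v_I)$ whenever $v_O = o'(s)$, and is left undefined otherwise (equivalently, routed to a rejecting sink that is subsequently deleted, as in Lemma~\ref{lem:B2S}). By construction, each state $s$ admits transitions only for labels $v$ whose output part equals $o'(s)$, so $\mathcal A'$ is output consistent. A routine induction on prefix length shows that the runs of $\mathcal A'$ on $\sigma \in (3^\AP)^\omega$ are in bijection with the initial paths $\pi$ of $\mathcal S'$ whose trace $\sigma_\pi$ equals $\sigma$, giving $L(\mathcal A') = L(\mathcal S')$, which equals $\min(\varphi)$ since $\mathcal S' \models \varphi$.

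Given this correspondence, I would conclude by contradiction. Suppose there were a skeleton $\mathcal S'$ modeling $\varphi$ with $|\mathcal S'| < |\mathcal S| = |\mathcal A|$. Applying the converse construction yields an output consistent safety automaton $\mathcal A'$ with $L(\mathcal A') = \min(\varphi)$ and $|\mathcal A'| = |\mathcal S'| < |\mathcal A|$, contradicting the assumed minimality of $\mathcal A$. Hence no smaller skeleton exists and $\mathcal S$ is minimal; uniqueness up to isomorphism then follows from Lemma~\ref{lem:uniSke}.

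The main delicate point will be justifying that the partial transition function $\delta'$ loses no strings of $\min(\varphi)$: every sequence in $L(\mathcal S')$ is, by the definition of a trace, output-consistent with the labels of the states it visits, so the restriction $v_O = o'(s)$ is vacuous on accepted sequences and only prunes away sequences that were already not in $L(\mathcal S')$. A secondary point to check is that the notion of minimality of $\mathcal A$ inherited from Lemma~\ref{lem:B2S} (minimality as a safety automaton for $\min(\varphi)$) implies minimality among output consistent ones, which is immediate since the family of output consistent safety automata is a subclass of safety automata.
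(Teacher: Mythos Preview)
Your proposal is correct and follows essentially the same approach as the paper: reverse the construction of Lemma~\ref{lem:A2SK} to obtain a size-preserving map from skeletons modeling $\varphi$ to output consistent safety automata for $\min(\varphi)$, and then derive a contradiction with the minimality of $\mathcal A$. The paper's proof is a two-sentence sketch of exactly this argument; your version simply spells out the reverse construction and the language-preservation claim in more detail.
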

\begin{proof}
	This follows from the fact that we can use the reverse of the construction presented in Lemma \ref{lem:A2SK} to construct the safety automaton from the skeleton. Assume  $\mathcal S$ was not minimal, then there is a skeleton $\mathcal S'$ with less number of states. This one, however, can be transformed backwards to a output consistent automaton of same size, which contradicts the assumption.  \qed
\end{proof}

Once we obtain a candidate skeleton, we check whether the skeleton is a model of the formula using the model checking algorithm presented in Section 4. If the skeleton is not a model, the algorithm returns a counterexample, which is a lasso-shaped trace in the candidate skeleton. As this trace must contain a bad-prefix, we can iteratively check all prefixes of the trace using membership queries until we reach the (shortest) bad-prefix. 

Using the results presented in  Theorem~\ref{lem:modelIn} (Equivalence query checking is in $\expspace$), Theorem~\ref{theo:lstar} (L$^*$ learns a minimal bad-prefix automaton in polynomial time in the size of the minimal automaton), Theorem~\ref{theo:expBad} (Membership checking is in $\expspace$), Lemma~\ref{lem:safetymin} (The language $\min(\varphi)$ can be characterized by a finite automaton), Lemma~\ref{lem:uniSke} (The minimal skeleton is unique), Lemma~\ref{lem:A2SK} (The safety automaton is a skeleton), and Lemma~\ref{lem:minSke}, we can conclude now with following theorem. 
\begin{theorem}
	Given an LTL formula $\varphi$, we can construct a  skeleton $\mathcal S$ that models $\varphi$ in time polynomial in the size of the minimal skeleton of $\varphi$. 
\end{theorem}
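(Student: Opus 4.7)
The plan is to combine the lemmas above into a single L$^*$-driven learning loop. The overall strategy: invoke L$^*$ to learn the minimal deterministic finite word automaton $\mathcal B$ for the bad-prefix language $\bp{\min(\varphi)}$, which is well-defined because $\min(\varphi)$ is a safety language by Lemma~\ref{lem:safetymin}. Once $\mathcal B$ is learned, I convert it into a safety automaton via Lemma~\ref{lem:B2S} and extract the desired skeleton via Lemma~\ref{lem:A2SK}.

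First I would implement a minimally adequate teacher. A membership query for a finite word $w \in (3^\AP)^*$ is answered by the $\expspace$ procedure of Theorem~\ref{theo:expBad}, returning yes iff $w$ is a bad-prefix for $\min(\varphi)$. For an equivalence query, I take the hypothesis DFA produced by L$^*$, apply Lemma~\ref{lem:B2S} to obtain a safety automaton $\mathcal A$, and test whether it is output consistent. If it is not, a pair of accepting transitions from the same state with conflicting output valuations yields a finite word whose bad-prefix status can be settled by one membership query and returned as an L$^*$ counterexample. If $\mathcal A$ is output consistent, I convert it into a candidate skeleton $\mathcal S$ via Lemma~\ref{lem:A2SK} and model-check $\mathcal S \models \varphi$ in $\expspace$ using Theorem~\ref{lem:modelIn}. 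A negative answer yields a lasso-shaped trace that must contain some bad-prefix; I locate its shortest bad-prefix by sequentially issuing membership queries on its prefixes and return that word as the counterexample.

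Correctness follows because by Lemma~\ref{lem:uniSke} the minimal skeleton for $\varphi$ is unique, and by Lemma~\ref{lem:minSke} the minimal output-consistent safety automaton for $\min(\varphi)$ is in bijection with it; hence the minimal DFA for $\bp{\min(\varphi)}$ that L$^*$ converges to is, up to the linear transformations of Lemmas~\ref{lem:B2S} and~\ref{lem:A2SK}, exactly the minimal skeleton of $\varphi$. For the running time, Theorem~\ref{theo:lstar} bounds the number of L$^*$ iterations and total work polynomially in the number of states of the target DFA and the length of the longest counterexample. The former is proportional to the size of the minimal skeleton by Lemmas~\ref{lem:B2S} and~\ref{lem:minSke}; the latter can always be capped at this same quantity, since every counterexample can be replaced by its shortest bad-prefix. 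Each oracle call costs $\expspace$ in $|\varphi|$ but is independent of the skeleton size, so the overall running time is polynomial in the size of the minimal skeleton.

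The main obstacle I anticipate is the equivalence-query procedure: a hypothesis DFA may fail to correspond to any skeleton (output inconsistency), and even when it does, the counterexample delivered by model checking is an infinite lasso in $(3^\AP)^\omega$ rather than a finite word, so an additional prefix-isolation step via further membership queries is required to produce a usable L$^*$ counterexample. Verifying that in both failure modes the returned word genuinely lies in the symmetric difference between the hypothesis language and $\bp{\min(\varphi)}$---so that L$^*$ is guaranteed to make progress in each round---is the delicate part of the argument.
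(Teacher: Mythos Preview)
Your proposal is correct and follows essentially the same route as the paper: use L$^*$ to learn the minimal bad-prefix DFA for the safety language $\min(\varphi)$ (Lemma~\ref{lem:safetymin}), answer membership queries via Theorem~\ref{theo:expBad}, answer equivalence queries by converting the hypothesis to a safety automaton (Lemma~\ref{lem:B2S}), performing the output-consistency check, extracting a candidate skeleton (Lemma~\ref{lem:A2SK}), and model checking it (Theorem~\ref{lem:modelIn}); the lasso-to-bad-prefix isolation step you describe is exactly what the paper does as well. Your treatment is in fact slightly more explicit than the paper's, which simply lists the relevant lemmas and theorems before stating the result.
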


\section{Conclusion} 
We have presented an analysis technique for temporal specifications of
  reactive systems that identifies, on the level of individual system
  outputs over time, which parts of the implementation are determined
  by the specification, and which parts are still open. 
  Based on the algorithms developed in this paper, a synthesis tool
  can represent this information in the form of a skeleton for the
  reactive system. 
  Skeletons are more informative than conventional transition systems in
  identifying critical situations that are still underspecified.

  Our automaton-based model checking algorithm for skeletons also serves as the teaching oracle in the learning-based synthesis algorithm.
   The learning algorithm L$^*$ can be used to synthesize minimal skeletons because skeletons define safety languages, which can be characterized by a unique minimal bad-prefix automaton. Once the automaton is learned, it can directly be transformed into a skeleton for the specification.  The skeleton is minimal and can be constructed in time polynomial in the number of states of the skeleton. 

   In the development of a reactive system, skeletons can be seen as an intermediate step between the specification of the system and its implementation.
   In future work, we plan to investigate this aspect further, by exploring an incremental development process, where the refinement of the specification is guided by the identification of underspecified situations through the skeletons synthesized from the intermediate specifications.

\bibliography{biblio.bib}

\begin{thebibliography}{10}

\bibitem{DBLP:conf/fmcad/AlurMT13}
Rajeev Alur, Salar Moarref, and Ufuk Topcu.
\newblock Counter-strategy guided refinement of {GR(1)} temporal logic
  specifications.
\newblock In {\em Formal Methods in Computer-Aided Design, {FMCAD} 2013}, pages
  26--33. {IEEE}, 2013.

\bibitem{Angluin:1987:LRS:36888.36889}
Dana Angluin.
\newblock Learning regular sets from queries and counterexamples.
\newblock {\em Inf. Comput.}, 75(2):87--106, November 1987.

\bibitem{Baier:2008:PMC:1373322}
Christel Baier and Joost-Pieter Katoen.
\newblock {\em Principles of Model Checking (Representation and Mind Series)}.
\newblock The MIT Press, 2008.

\bibitem{Bloem:2012:SRD:2160191.2160503}
Roderick Bloem, Barbara Jobstmann, Nir Piterman, Amir Pnueli, and Yaniv Saar.
\newblock Synthesis of reactive(1) designs.
\newblock {\em J. Comput. Syst. Sci.}, 78(3):911--938, May 2012.

\bibitem{Bruns2004}
Glenn Bruns and Patrice Godefroid.
\newblock Model checking with multi-valued logics.
\newblock In Josep D{\'i}az, Juhani Karhum{\"a}ki, Arto Lepist{\"o}, and Donald
  Sannella, editors, {\em Automata, Languages and Programming: 31st
  International Colloquium, ICALP 2004. Proceedings}, pages 281--293. Springer
  Verlag, 2004.

\bibitem{Chechik:2003:MSM:990010.990011}
Marsha Chechik, Benet Devereux, Steve Easterbrook, and Arie Gurfinkel.
\newblock Multi-valued symbolic model-checking.
\newblock {\em ACM Trans. Softw. Eng. Methodol.}, 12(4):371--408, October 2003.

\bibitem{JSL:9119904}
Alonzo Church.
\newblock Logic, arithmetic, and automata.
\newblock In {\em Proc. Internat. Congr. Mathematicians (Stockholm, 1962)},
  pages 23--35. Inst. Mittag-Leffler, Djursholm, 1963.

\bibitem{Clarke:1981:DSS:648063.747438}
Edmund~M. Clarke and E.~Allen Emerson.
\newblock Design and synthesis of synchronization skeletons using
  branching-time temporal logic.
\newblock In {\em Logic of Programs, Workshop}, pages 52--71, London, UK, UK,
  1982. Springer-Verlag.

\bibitem{Easterbrook:2001:FMR:381473.381516}
Steve Easterbrook and Marsha Chechik.
\newblock A framework for multi-valued reasoning over inconsistent viewpoints.
\newblock In {\em Proceedings of the 23rd International Conference on Software
  Engineering}, ICSE '01, pages 411--420. IEEE Computer Society, 2001.

\bibitem{ScheweFinkbeiner2013}
Bernd Finkbeiner and Sven Schewe.
\newblock Bounded synthesis.
\newblock {\em International Journal on Software Tools for Technology
  Transfer}, 15(5-6):519--539, 2013.

\bibitem{Koenighofer2011}
Robert K{\"o}nighofer, Georg Hofferek, and Roderick Bloem.
\newblock Debugging unrealizable specifications with model-based diagnosis.
\newblock In Sharon Barner, Ian Harris, Daniel Kroening, and Orna Raz, editors,
  {\em Hardware and Software: Verification and Testing: 6th International Haifa
  Verification Conference, HVC 2010}, pages 29--45. Springer Verlag, 2011.

\bibitem{DBLP:conf/memocode/LiDS11}
Wenchao Li, Lili Dworkin, and Sanjit~A. Seshia.
\newblock Mining assumptions for synthesis.
\newblock In Satnam Singh, Barbara Jobstmann, Michael Kishinevsky, and Jens
  Brandt, editors, {\em 9th {IEEE/ACM} International Conference on Formal
  Methods and Models for Codesign, {MEMOCODE} 2011, Cambridge, UK, 11-13 July,
  2011}, pages 43--50. {IEEE}, 2011.

\bibitem{Manna:1984:SCP:357233.357237}
Zohar Manna and Pierre Wolper.
\newblock Synthesis of communicating processes from temporal logic
  specifications.
\newblock {\em ACM Trans. Program. Lang. Syst.}, 6(1):68--93, January 1984.

\bibitem{Pnueli:1989:SRM:75277.75293}
A.~Pnueli and R.~Rosner.
\newblock On the synthesis of a reactive module.
\newblock In {\em Proceedings of the 16th ACM SIGPLAN-SIGACT Symposium on
  Principles of Programming Languages}, POPL '89, pages 179--190, New York, NY,
  USA, 1989. ACM.

\bibitem{Pnueli:1977:TLP:1382431.1382534}
Amir Pnueli.
\newblock The temporal logic of programs.
\newblock In {\em Proceedings of the 18th Annual Symposium on Foundations of
  Computer Science}, SFCS '77, pages 46--57. IEEE Computer Society, 1977.

\bibitem{rosner1991modular}
Roni Rosner.
\newblock {\em Modular Synthesis of Reactive Systems}.
\newblock PhD thesis, Weizmann Institute of Sceince, Rehovot, Israel, 1992.

\bibitem{Vardi95alternatingautomata}
Moshe~Y. Vardi.
\newblock Alternating automata and program verification.
\newblock In {\em In Computer Science Today. LNCS 1000}, pages 471--485.
  Springer-Verlag, 1995.

\end{thebibliography}
\bibliographystyle{plain}

\newpage
\appendix

\end{document}